
\documentclass[10pt,conference,compsocconf,letterpaper]{IEEEtran}

\usepackage{url}
\usepackage{cite}
\usepackage{graphicx}
\usepackage{amsmath}
\usepackage{latexsym}
\usepackage[boxruled]{algorithm2e}
\usepackage{algorithmic}
\usepackage{epsfig}
\graphicspath{{./figs/}}
\usepackage{amssymb,amsmath}

\newcommand{\note}[1]{{\underline{}}}
\newtheorem{theorem}{Theorem}[section]
\newtheorem{definition}[theorem]{Definition}
\newtheorem{lemma}[theorem]{Lemma}
\newtheorem{corollary}[theorem]{Corollary}



\title{Modeling DDoS Attacks by  Generalized Minimum Cut Problems}

\pagenumbering{arabic}
\author{\authorblockN{Qi Duan, Haadi Jafarian and Ehab Al-Shaer\\}
\authorblockA{Department of Software and Information Systems\\
University of North Carolina at Charlotte \\
Charlotte, NC, USA\\}
\and
\authorblockN{Jinhui Xu\\}
\authorblockA{Department of Computer Science and Engineering\\
State University  of New York at Buffalo \\
Buffalo, NY, USA}
}

\begin{document}
\maketitle

\bibliographystyle{plain}

\begin{abstract}

Distributed Denial of Service (DDoS) attack is  one of the most preeminent threats in Internet. Despite considerable progress on this problem in recent years,
a remaining challenge is to determine its hardness by adopting proper mathematical models. In this paper, we propose to use  generalized minimum cut as basic tools to model
various types of DDoS attacks. Particularly, we study two important extensions of the classical minimum cut problem,
called {\em Connectivity Preserving Minimum Cut (CPMC)}
and {\em Threshold Minimum Cut (TMC)}, to model large-scale DDoS attacks.
In the CPMC problem,
a minimum cut is sought to separate a
  source node from a destination node and meanwhile preserve the connectivity
between the source and its partner node(s). The CPMC problem
also has important applications in many other areas such
as emergency responding, image processing,
pattern recognition, and medical sciences.  In the TMC problem, a minimum
cut is sought to isolate a target node from a threshold number of
partner nodes. TMC problem is an interesting special case of the network
inhibition problem and finds applications in network security.
 We  show that  the general CPMC problem  cannot be approximated within $logn$
 unless $NP=P$.
 We also show that a special case of the CPMC problem
in planar graphs can be solved in polynomial time. The corollary of this
result is that the network diversion problem in planar graphs is in $P$; this settles
a previously open problem.
For the TMC problem,
we  show that the threshold minimum node cut (TMNC) problem  can be approximated within a ratio of $O(\sqrt{n})$
and the  threshold minimum edge cut   (TMEC) problem
can be approximated within a ratio of $O(\log^2{n})$.
As a consequence, we show that the related network inhibition problem and network interdiction problem cannot be approximated within any constant ratio
unless $NP \nsubseteq \cap_{\delta>0} BPTIME(2^{n^{\delta}})$. This settles another long standing open problem.
%
\end{abstract}


\section{Introduction}
\label{sec:intro}
Distributed Denial of Service (DDoS) attacks have become
one of most preeminent threats to Internet.
The DDoS attacks against
critical infrastructure (e.g., Internet backbone, power grid,
 financial services) are especially harmful.
For example, the Crossfire attack~\cite{cross} can disable up to
53\% of the Internet connections in some
US states, and up to about 33\% of  the connections in
the West Coast of the US.
Link/node flooding is an important form of DDoS attacks.
From the algorithmic point of view, link/node flooding
is closely related to the minimum cut problem. To better deal with such attacks, in this paper, we propose to use two generalized minimum cut problems to model them.

The basic minimum cut problem
 is one of the most fundamental problems in computer science and has numerous
applications in many different areas~\cite{Papadimitriou93,Vazirani04,PS98,Lawler01}.
In this paper, we investigate two important generalizations
of the minimum cut problem and their applications in link/node cut
based DDoS attacks.
 The first generalization is denoted as the {\em Connectivity Preserving Minimum Cut (CPMC)} problem,
which is to
find the minimum cut that separates a pair (or pairs) of source and destination
nodes and meanwhile preserve the connectivity between the source and its partner node(s).
The second generalization is denoted as the {\em Threshold Minimum Cut (TMC)} problem
in which a minimum
cut is sought to isolate a target node from a threshold number of
partner nodes.
The basic minimum cut problem tries to find a minimum node/edge
cut between a pair of nodes. If we want to find a minimum cut
to separate two  nodes (called {\em partner nodes}) from a third node, the minimum cut may
also separate the two partner nodes. For example,  in Fig~\ref{fig:cut_example}
we want to find a node cut to separate two partner nodes $s_1$ and $s_2$ from another
node $t$. If we choose node $s_5$ to be the cutting node,
then $s_1$ and $s_1$ will still be connected, which means
it is a connectivity preserving cut.
  If we choose node $s_4$ to be the cutting node,
then $s_1$ and $s_1$ will be disconnected, which means
it is not a connectivity preserving cut.  In many applications
it is desirable to find a connectivity preserving minimum cut
since it is a natural requirement to maintain some connectivity
when one wants to cut some links or nodes.

\begin{figure}[ht]
\centering
\includegraphics[height=3.2in]{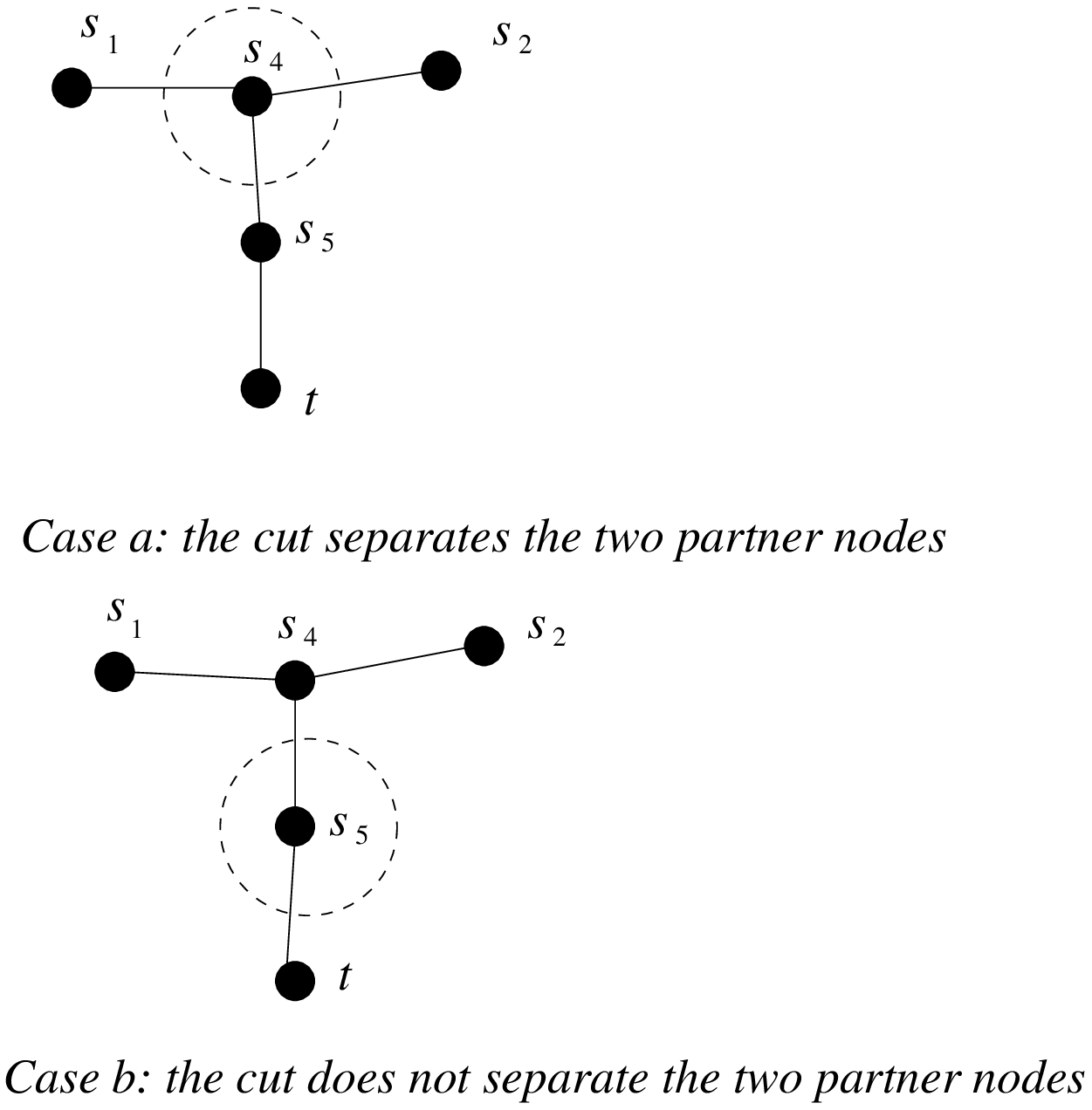}
\vspace{-0.3in}
\caption{An example of connectivity preserving minimum cut}
\label{fig:cut_example}
\end{figure}

The CPMC problem has  recently been studied in~\cite{cpmc14} (a
special case studied in~\cite{MCB14}). It
has  applications in many other areas, such as
 emergency responding,
data mining, pattern recognition, and machine learning.
It is very natural to have connectivity related constraints
in many minimum cut based applications since  ``cut''
is destructive and one may want to limit the destructive
nature of the minimum cut. In Botnet based DDoS attacks, it is important 
to find the cut to isolate the target server or area, while 
at the same time maintaining the connectivity between Bots
and Botmasters. In link cut attacks like Crossfire attack,
 it is important to avoid early congestion,
which is one kind of connectivity constraints.
In applications related to  emergency response, when a
gun attack such as the Sandy Hook elementary
school shooting~\cite{Sandy} happens in a building, the best response is to shut
down certain passages, and at the same time to
make sure that every one can have access to
some exits so that rescue personnels can reach them. This problem
is  closely related to CPMC.
In medical science, the  protein state transitions can be
modeled  as directed graphs called
biological regulatory networks (BRN)~\cite{BCRG04}. If one can identify the pathways that lead to
the states which cause cancer
or other possible diseases, then one should try
to find the best way to prevent the system from reaching
those dangerous states. For example, one may  find an optimal
 cut to disconnect the possible paths leading to those states, and at the same time maintain the
paths that are needed for normal metabolism. This is exactly the
CPMC problem.
The network diversion problem~\cite{Curet01, CWN13, CCDEGOQ01}
is a problem to consider the minimum cut to divert traffic to a certain link or set of links/nodes.
It has some similarity with the CPMC problem. However, the network diversion problem
defines the problem only in the context of  network diversion
(none of the works recognizes the more important CPMC problem), while the CPMC problem
 defines the problem in a more  natural way and has a lot more applications. In directed graphs,
the network diversion problem is not even an NP optimization problem since the
desired cut may not exist  and it is  NP-complete to judge if node disjoint paths between
 two pairs of nodes exist~\cite{Tholey04}. The CPMC problem is an NP optimization
problem in both undirected and directed graphs.
 Also, in planar graphs, the network diversion problem cannot be reduced to
the CPMC problem.
  Even without considering all the applications, the CPMC
 is a very natural problem in pure graph theory and discrete mathematics.

The TMC problem
arising naturally from DDoS  attacks, threshold cryptography,
 and distributed data storage, which concerns with blocking
a node from a threshold number of related nodes.
Threshold cryptography and threshold
related protocols  have wide applications
 in network security such as
secure and reliable cloud storage service~\cite{CYYLH12},
secure key generation and sharing
in mobile ad hoc networks~\cite{CAG04, CGA05, SKMO09}, etc.
 The natural optimization problem arises from
 threshold based protocols
is to block a node from a threshold number of related nodes
with minimum cost. This optimization problem
is important for both attackers and defenders.
From the attacker's point of view, he/she needs to
find an optimal way to thwart the execution of
the threshold based protocol, or crack some information
by compromising a threshold number of nodes or links. On the other hand, from
the defender's point of view, the defender may
need to block the communication between a
Bot master and a threshold number of Bots
to thwart the Botnet attacks.
In link cut DDoS attacks like  Crossfire
and Coremelt~\cite{core} attacks,
the attacker may try to achieve the desired
degradation ratio by flooding a  set of critical links
with minimum amount attacking flows. The problem
can be easily converted to TMC. In distributed cloud
storage, the attacker may try to disconnect
the user from a threshold number of cloud servers to
disrupt the services that require a certain number
of available servers, the problem is exactly TMC.
The TMC problem is an important special
case of the network inhibition
problem or network interdiction problem, and was first
studied in~\cite{DM13}.

The main results of this paper are the follows: (1)
We show that CPMC in directed graphs and multi-node
CPMEC are hard to approximate, 3-node planar CPMNC is in $P$,
and the network diversion problem in undirected planar graphs is in $P$, which
settles a long standing open problem~\cite{CWN13}.
(2) We show that the TMNC problem
can be approximated within a ratio of $\sqrt{n}$
and the TMEC problem
can be approximated within a ratio of $\log^2{n}$.
We reveal the relationship between
the  TMC  problem and some other closely
related problems. Particularly, we show that the
network inhibition problem~\cite{Phillips93} and the network interdiction problem~\cite{AEU10}
cannot be approximated within any constant ratio
unless $NP \nsubseteq \cap_{\delta>0} BPTIME(2^{n^{\delta}})$. This settles another long standing open problem.

The rest of the paper is organized as follows.
Section~\ref{sec:cpmc} investigates the hardness
of the CPMC problem. Section~\ref{sec:planar-cpmc}
discusses the hardness of CPMC in planar graphs.
Section~\ref{sec:thresh} and Section~\ref{sec:alg}
discusses the hardness and algorithms of the TMC
problem, respectively. Future work is presented
in Section~\ref{sec-fw}.
%
%

\section{Hardness Results of CPMC}
\label{sec:cpmc}
We adopt the notation from~\cite{cpmc14}.
The most simple case of CPMC is the 3-node CPMC.
 Informally speaking, in the 3-node CPMC
problem, we are given a connected graph $G=(V,E)$ with positive node (or edge) weights,
a source node $s_1$ and its partner node $s_2$, and a destination node $t$. The objective
is to compute a cut with minimum weight to disconnect the source $s_1$ and destination $t$, and
meanwhile preserve the connectivity of $s_1$ and its partner node
 $s_2$ (i.e., $s_{1}$ and $s_{2}$ are connected after the cut). The weights
can be associated with either the nodes (i.e., vertices) or the edges,
and accordingly the cut can be either a set of nodes, called a connectivity
preserving node cut, or a set of edges, called a connectivity preserving edge cut.
In the former case, a cut is a subset of vertices $V$ whose removal (along with the edges
incident to them) disconnects $s_1$ and $t$, but does not affect the connectivity
 of $s_1$ and $s_2$. Such a cut is called a {\em connectivity preserving node
cut (CPMNC)}. In the latter case, a cut is a subset of edges whose removal disconnects $s_1$ and $t$ and preserves the connectivity of $s_1$ and $s_2$. Such a cut is called a  {\em connectivity preserving minimum edge cut (CPMEC)}.
The weight of a cut $C$ is the total weight associated with the nodes or edges in $C$.
Note that we can easily extend the 3-node CPMC problem
to the general case CPMC where
one may have multiple pairs of source and destination nodes, and each
source node may have multiple partner nodes.

First we note that the CPMNC problem is an NP optimization problem. To determine whether a valid cut exists, one just needs to check if $t$ is connected to any bridge node between $s_1$ and $s_2$; if so, then no valid cut exists. Clearly, this can be done in polynomial time. Thus, we assume thereafter that a cut always exists.

The decision version of the 3-node CPMNC problem is as follows:
given an undirected graph $G=(V,E)$ with each node $v_i \in V$ associated with a positive integer weight $c_i$, a source node $s_1$, a partner node $s_2$, a destination node $t$, and an integer $B>0$, determine whether there exists a subset of nodes in $V$ with total weight less than or equal to $B$ such that the removal of this subset disconnects $t$ from $s_1$ but preserves the connectivity between $s_1$ and $s_2$.


The decision version of the 3-node CPMEC problem can be defined similarly:
given an undirected graph $G=(V,E)$ with each edge $e_i \in E$ associated with a positive
integer weight $c'_i$, a source node $s_1$, a partner node $s_2$, a destination
node $t$, and an integer $B'>0$, determine whether there exists a subset of edges in $E$
with total weight less than or equal to $B'$ such that the removal of this subset disconnects $t$ from $s_1$ but preserves the connectivity between $s_1$ and $s_2$.

The CPMEC
  has several
key differences from CPMNC. First,
the resulting graph after the node cut in CPMNC may be
disintegrated into many connected components,
where in CPMEC the resulting graph  has exactly
two connected components (otherwise there will
be some redundant edges, and the cut cannot be the
minimum one). Second, suppose the weight
of the  minimum edge cut between
a single node $s_1$ and destination $t$ is $C_e(s_1,t)$,
the weight of the minimum edge cut
(not necessarily connectivity preserving)
 between two nodes $s_1,s_2$ and destination $t$
 is $C_e(s_1,s_2,t)$, if $C_e(s_1,t)+C_e(s_2,t) > C_e(s_1,s_2,t)$,
then the  minimum edge cut  between two
nodes $s_1,s_2$ and destination $t$ must be
connectivity preserving.
Node cut does not
has this property. These key differences
mean that the node cut problem and the
edge cut problem may have different hardness.
In our NP-hardness proof of the CPMNC, we cannot modify it to get
a proof for the CPMEC.

Given nodes $s_1$ and $t$ in a graph, we can
classify other nodes into several categories.
If a node $s_2$ has the property
 $C_e(s_1,t)+C_e(s_2,t) > C_e(s_1,s_2,t)$,
  then it is easy to show that the minimum  edge cut  between two
nodes $s_1,s_2$ and destination $t$ must be
connectivity preserving.

\begin{lemma}
For two points $s_1$ and $s_2$ in the graph,
if $C_e(s_1,t)+C_e(s_2,t) > C_e(s_1,s_2,t)$,
then the minimum  edge cut  between two
nodes $s_1,s_2$ and destination $t$ must be
connectivity preserving.
\end{lemma}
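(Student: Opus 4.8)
The plan is to argue by contradiction. Suppose that, despite the hypothesis $C_e(s_1,t)+C_e(s_2,t) > C_e(s_1,s_2,t)$, there is a minimum-weight edge cut $C$ separating $\{s_1,s_2\}$ from $t$ for which $s_1$ and $s_2$ lie in different connected components of $G\setminus C$. I would then produce a lower bound on the weight $w(C)$ that strictly exceeds $C_e(s_1,s_2,t)$, which is impossible since $C$ is a \emph{minimum} such cut. The target inequality to reach is $w(C)\ge C_e(s_1,t)+C_e(s_2,t)$.

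The first step is a structural observation forced by the positivity of the edge weights: a minimum cut is inclusion-minimal (deleting any single edge, all of positive weight, would otherwise yield a cheaper valid cut), so $C$ must equal the edge boundary $\partial(T)$ of the component $T$ of $G\setminus C$ that contains $t$ — there are no "redundant" edges of $C$ sitting strictly inside $T$ or strictly inside $V\setminus T$, because deleting such an edge would still leave $\{s_1,s_2\}$ cut off from $t$. Consequently $G\setminus C$ restricted to $V\setminus T$ is simply the induced subgraph $G[V\setminus T]$, whose connected components carry no $G$-edges between them. Let $S_1\ni s_1$ and $S_2\ni s_2$ be the components of $s_1$ and $s_2$ in $G\setminus C$; by our assumption $S_1\ne S_2$, and $t\notin S_1\cup S_2$. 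Since $S_1$ and $S_2$ are distinct components of $G[V\setminus T]$, the only $G$-edges leaving $S_i$ run into $T$, hence $\partial(S_i)\subseteq\partial(T)=C$; moreover $\partial(S_1)\cap\partial(S_2)=\emptyset$, because an edge in both would have to join $S_1$ and $S_2$ directly, which cannot happen.

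Now each $\partial(S_i)$ is itself an edge cut separating $s_i$ from $t$ (as $s_i\in S_i$ and $t\notin S_i$), so $w(\partial(S_i))\ge C_e(s_i,t)$. Since $\partial(S_1)$ and $\partial(S_2)$ are disjoint subsets of $C$ and all weights are positive,
\[
C_e(s_1,s_2,t)\ =\ w(C)\ \ge\ w(\partial(S_1))+w(\partial(S_2))\ \ge\ C_e(s_1,t)+C_e(s_2,t),
\]
contradicting the hypothesis. Hence $s_1$ and $s_2$ must lie in the same component of $G\setminus C$ for every minimum $\{s_1,s_2\}$-to-$t$ edge cut $C$, i.e., every such cut is connectivity preserving.

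I expect the only delicate point to be the structural step — establishing $C=\partial(T)$ and, as a consequence, that $\partial(S_1)$ and $\partial(S_2)$ are \emph{disjoint} subsets of $C$. The subtlety is ruling out an edge of $C$ lying entirely within $V\setminus T$ (in particular an edge directly between $S_1$ and $S_2$): deleting such an edge would still leave $\{s_1,s_2\}$ disconnected from $t$, which contradicts the inclusion-minimality that positive weights force on a minimum cut. Once disjointness is secured, the inequality chain is immediate, and the same argument extends verbatim to the case of several partner nodes by summing $w(\partial(S_i))$ over the components containing them.
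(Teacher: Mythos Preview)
Your proof is correct and rests on the same key observation as the paper's—namely, that positive weights force a minimum $\{s_1,s_2\}$--$t$ edge cut to be inclusion-minimal, hence equal to $\partial(T)$—but your organization is cleaner and more direct. The paper argues forward: it writes the minimum cut as a union of an $s_1$--$t$ cut and an $s_2$--$t$ cut, observes that the hypothesis forces these two sub-cuts to share an edge, and then runs a short case analysis on where that common edge sits to conclude that $s_1$ and $s_2$ must end up connected. You instead take the contrapositive route: assume $s_1$ and $s_2$ land in distinct components $S_1,S_2$, use $C=\partial(T)$ to get that $\partial(S_1)$ and $\partial(S_2)$ are \emph{disjoint} subsets of $C$, and sum to obtain $w(C)\ge C_e(s_1,t)+C_e(s_2,t)$, an immediate contradiction. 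Your version avoids the somewhat informal case split and makes the disjointness step explicit, which is exactly the point the paper is implicitly using when it says the two sub-cuts ``must have some common edges''; it also generalizes transparently to more partner nodes, as you note.
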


\begin{proof}
For a minimum  edge cut  between two
nodes $s_1,s_2$ and destination $t$,
it must be the  the union
of two cuts: one is the
cut between $s_1$ and $t$, another
is  between $s_2$ and $t$.
The sum of this two cut
is at least  $C_e(s_1,t)+C_e(s_2,t)$.
If $C_e(s_1,t)+C_e(s_2,t) > C_e(s_1,s_2,t)$,
then the two cuts must have some
common edges. But if a common edge
exists,
 then
we have two cases:

  case 1: The common edge
is connected with the component
of $s_1$ and $s_2$, then this edge can be
removed from the cut, and the remaining
cut is still valid, and $s_1$ and $s_2$
is now connected.

case 2: The common edge
is connected with the two components
and the $t$ component, in this case,
$s_1$ and $s_2$ must be connected.

\end{proof}

If  $C_e(s_1,t)+C_e(s_2,t) = C_e(s_1,s_2,t) = C_{ep}(s_1,s_2,t)$
(here $C_{ep}(s_1,s_2,t)$ is the CPMEC between $s_1$, $s_2$, and $t$),
we call $s_2$ a threshold node of $s_1$.
time. If
node $s_2$ satisfies
 $C_{ep}(s_1,s_2,t) > C_e(s_1,t)+C_e(s_2,t)$, we
call them outer points of $s_1$.

We also investigate the multiple-partner CPMEC problem.
In this case, we have $u+1$ nodes $s_1, \ldots, s_{\tau},t$ in the graph, and
the objective is to find a minimum edge cut that
separates $ s_1, \ldots,  s_{\tau}$  from $t$, and at the same time
keeps   $ s_1, \ldots,  s_{\tau}$   connected.

Note that this problem is still an NP optimization
problem. If the removal of node $t$ causes some of the $s_i$ nodes to be
 disconnected from others, then no valid cut exists.
 Since this can be determined in polynomial time, we always assume
 that there exists a solution to the problem.

 In some applications, we need
to find a minimum cut to separate
two connected components $\Gamma_1$ and
$\Gamma_2$ from
another connected component $T$ in a graph,
and keep $\Gamma_1$ and $\Gamma_2$ connected.
This is a generalization of the
original 3-node connectivity preserving
minimum cut problem. We can show that
 the generalized problem has the same approximability
as the original problem, in both the
cases of node cut and edge cut.

\begin{theorem}
\label{thm:shr}
The generalized connectivity preserving
minimum cut problem can be L-reduced
to the original connectivity preserving
minimum cut problem. This means
the generalized problem has the same approximability
as the original connectivity preserving
minimum cut problem. In other words,
if the   connectivity preserving
minimum cut problem can be approximated
within $f(n)$ (where $n$ is the input size),
then the  generalized problem
can also be  approximated within $f(n)$.
\end{theorem}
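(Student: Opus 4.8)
The plan is to give an explicit L-reduction in which the forward map is ``contract the regions'' and the backward map is ``lift the cut'', so that optimal values are preserved exactly and the L-reduction constants are $\alpha=\beta=1$. Fix a generalized instance with node (or edge) weights, a source region $\Gamma_1$, a partner region $\Gamma_2$ and a destination region $T$, where each region induces a connected subgraph, the three regions are pairwise disjoint, and (as with the terminals $s_1,s_2,t$ in the 3-node problem) no vertex or edge inside a region is eligible to be cut. Let $\phi(G)$ be the graph obtained by contracting $\Gamma_1,\Gamma_2,T$ into single uncuttable super-vertices $s_1,s_2,t$: edges internal to a region are deleted, an edge with exactly one endpoint in a region becomes an edge incident to the corresponding super-vertex carrying the same weight (parallel edges may be kept, or merged with summed weights), and all remaining vertices and edges retain their weights. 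Then $\phi(G)$ is a 3-node CPMC instance, computable in polynomial time, with $|V(\phi(G))|\le|V(G)|$; and we let $\psi$ send a cut $C$ of $\phi(G)$ back to the identical set of original edges/vertices, each of which is present in $G$.

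The heart of the argument is a cost-preserving correspondence between feasible solutions. In the edge case I would first normalize: an optimal generalized edge cut may be assumed to contain no edge internal to a region, since removing such an edge from the cut only re-adds an edge whose two endpoints lie in one region, which cannot reconnect that region to another; hence the $\Gamma$--$T$ separation is maintained, connectivity only improves, and the cost strictly drops. In the node case this holds for free because region vertices are not cuttable. After this normalization no region is internally disturbed, so in $G-C$ each of $\Gamma_1,\Gamma_2,T$ lies in one connected component, the induced structure on the remaining ``outside'' vertices is identical in $G-C$ and in $\phi(G)-C$, and any path joining two regions in $G-C$ must avoid $T$ (else $\Gamma_1$ would reach $T$). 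Translating the maximal within-region segments of paths into the super-vertices gives, simultaneously, ``$\Gamma_1$ is separated from $T$ in $G-C$'' $\Leftrightarrow$ ``$s_1$ is separated from $t$ in $\phi(G)-C$'' and ``$\Gamma_1,\Gamma_2$ are connected in $G-C$'' $\Leftrightarrow$ ``$s_1,s_2$ are connected in $\phi(G)-C$''. Thus feasibility transfers in both directions with equal weight, so $\psi$ is feasibility- and cost-preserving and $\mathrm{OPT}(\phi(G))=\mathrm{OPT}(G)$, which gives the L-reduction with $\alpha=\beta=1$.

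For the stated approximation consequence, given an $f(n)$-approximation algorithm for the 3-node problem (with $f$ nondecreasing), run it on $\phi(G)$, whose size is at most $n=|V(G)|$, obtain a cut of weight at most $f(n)\cdot\mathrm{OPT}(\phi(G))=f(n)\cdot\mathrm{OPT}(G)$, and apply $\psi$ to get a feasible generalized cut of the same weight; this is an $f(n)$-approximate solution of the original instance. The argument is identical for CPMNC and for CPMEC. I expect the only real subtlety --- the main obstacle --- to be the bookkeeping of conventions: that region vertices/edges behave as terminals, that the regions are disjoint, and the edge-case normalization above, together with fixing the intended reading of the connectivity requirement (``all of $\Gamma_1\cup\Gamma_2$ in one component'' versus ``some $\Gamma_1$--$\Gamma_2$ path''), which under either reading leaves the contraction argument unchanged since everything inside a region is mutually connected in $G-C$. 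Once these are fixed, the segment-by-segment translation of paths between $G-C$ and $\phi(G)-C$ is routine.
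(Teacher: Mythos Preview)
Your proposal is correct and follows essentially the same contraction approach as the paper: shrink each of $\Gamma_1,\Gamma_2,T$ to a single terminal, observe that feasible cuts correspond with equal weight, and conclude that the optimum (and hence the approximation ratio) is preserved. The only cosmetic difference is in handling the multi-edges created by contraction---the paper subdivides them with infinite-weight intermediate nodes to keep the graph simple, whereas you keep parallel edges or merge them with summed weight; both work, and your write-up is in fact more explicit about the L-reduction constants and the path-translation argument than the paper's own proof.
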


\begin{proof}
To see this, we can
transform the connected components
$\Gamma_1$,$\Gamma_2$, and $T$ to three nodes. For
every connected  component, we can shrink the
whole component into one new node. All
edges inside the connected component
are deleted. For
every edge connecting a node in the component
and a node outside the component, we add
an edge between the new node and the outside node.
To make the graph still a simple graph,
if there are multiple edges between two nodes,
we add an intermediate node in the edge.
For node cut, we set the weight
of the intermediate node to be infinity. For
edge cut, the two intermediate edges all
have the same weight as the old edge.
An example of component shrinking is shown
in Fig. ~\ref{fig:shr}.

\begin{figure}
\centering
\includegraphics[width=3in]{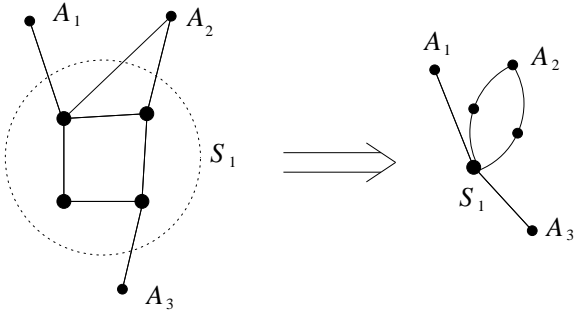}
\caption{Transformation of the Connected Component}
\label{fig:shr}
\end{figure}

 Now it is easy to see,
any  connectivity preserving
minimum cut in the old graph
  is the   connectivity preserving
minimum cut for the new graph, and
vice versa. Also note that
after the shrinking procedure,
the size of the graph decreased.
So the generalized problem
has the same approximability as the original
problem.

\end{proof}

To study the 3-node CPMEC problem in directed graphs, we first give its definition.

\begin{definition} [ \bf 3-node CPMEC in directed graphs]
Let $G=(V,E)$ be a directed graph with $n$ nodes and $m$ edges. Each
edge $e_i \in E$ ($1\leq i\leq m$) is associated with a
 positive integer weight. Given three nodes $s_1$, $s_2$, $t$,
 and a positive integer $b$, the $3$-node CPMEC problem for $(s_{1}, s_{2}, t)$ is to
seek a subset $C$ of edges in $E$ with total weight less than or equal to $b$ such that
after the removal of $C$, there is no path  from $t$  to $s_1$
and $s_2$ and meanwhile  there exists at least one
path from $s_1$ and $s_2$.
\end{definition}

Below we show that the 3-node CPMEC problem on directed graphs
cannot be approximated within a logarithm ratio.

\begin{theorem}
\label{thm:di}
In a directed graph $G$, the 3-node connectivity preserving minimum edge cut
problem cannot be approximated within a factor of
$\alpha logn$ for some constant $\alpha$ unless
$P=NP$.
\end{theorem}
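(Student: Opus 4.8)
I would prove this by a reduction from \textsc{Minimum Set Cover} that preserves the optimum. By the Raz--Safra theorem, \textsc{Set Cover} on an $N$-element universe cannot be approximated in polynomial time within $c\log N$ for some absolute constant $c>0$ unless $P=NP$; so it suffices to turn such instances, in polynomial time, into 3-node directed CPMEC instances of size $\mathrm{poly}(N)$ whose optimum is a fixed affine function of the minimum cover size. Note the connectivity constraint must do real work here: without it the problem is just a polynomial-time minimum $t$--$\{s_1,s_2\}$ cut.

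Given $U=\{u_1,\dots,u_N\}$ and $\mathcal{S}=\{S_1,\dots,S_M\}$ (we may assume every element lies in some set), I would build a directed, edge-weighted graph $G$ on nodes $s_1,s_2,t$ with two weight scales: a unit-weight ``selection'' edge for each set $S_j$, and weight $M{+}1$ on every other edge, so that any finite-weight cut --- in particular any optimal one --- consists solely of selection edges and its weight is the number of sets it selects. The heart of $G$ is a directed ``coverage gadget'' on the $t$-to-$s_1$ routes in which each element $u_i$ owns a directed path from $t$ to $s_1$ that is forced to traverse the selection edges of precisely the sets containing $u_i$; since $t$ reaches $s_2$ only through this same gadget, a set of selection edges disconnects $t$ from both $s_1$ and $s_2$ exactly when the corresponding sets cover $U$. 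The last ingredient is to route the $s_1$--$s_2$ connections so that (i) a minimum-size cover's selection edges still leave an $s_1\leadsto s_2$ path intact, while (ii) the sub-cover ways of separating $t$ from $s_1$ --- which do exist in $G$, and each of which is individually cheaper --- all destroy every $s_1\leadsto s_2$ path; thus the connectivity-preservation requirement pushes the optimum up to exactly the cover size.

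Granting the construction, correctness is routine. From a minimum cover $\mathcal{C}$, $|\mathcal{C}|=k$, deleting the $k$ corresponding selection edges separates $t$ from $s_1$ and $s_2$ by the coverage property and keeps $s_1\leadsto s_2$ by (i), a valid cut of weight $k$; conversely any connectivity-preserving cut of finite weight $W$ is a set of $W$ selection edges separating $t$ from $s_1$, hence by the coverage property corresponds to a cover of size $W$. So $\mathrm{OPT}_{\mathrm{CPMEC}}(G)=\mathrm{OPT}_{\mathrm{SC}}$, and since $n:=|V(G)|=\mathrm{poly}(N)$ we have $\log n=\Theta(\log N)$; a polynomial-time $\alpha\log n$--approximation for 3-node directed CPMEC would then give a $c\log N$--approximation for \textsc{Set Cover} once $\alpha$ is chosen small enough, forcing $P=NP$.

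The step I expect to be hardest is engineering the coverage gadget in a \emph{directed} graph so that a single unit-weight selection edge can be \emph{shared} by the audit paths of all elements of a set (this is what makes a cut's weight equal the number of chosen sets rather than $\sum_j|S_j|$) while no ``shortcut'' directed $t\leadsto s_1$ path --- one obtained by splicing together fragments of several audit paths and thereby dodging the selection edges of some covering set --- survives. Ruling these out will require laying all audit paths out consistently with one global topological order and placing element-private entry and exit vertices around each shared selection edge; the bulk of the proof will be that verification, together with the complementary check that every separation of $t$ from $s_1$ cheaper than a cover indeed severs the $s_1$--$s_2$ connection.
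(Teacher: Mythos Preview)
Your proposal has a structural contradiction that makes the reduction collapse. You stipulate that in the coverage gadget every element $u_i$ owns a $t\to s_1$ path passing (in series) through the selection edges of exactly the sets containing $u_i$, and you intend to rule out spliced ``shortcut'' paths. If you succeed, then a set of selection edges separates $t$ from $s_1$ \emph{if and only if} it is a cover: any non-cover leaves some element's own path intact. Since every non-selection edge has weight $M{+}1>M\ge\mathrm{OPT}_{\mathrm{SC}}$, the ordinary (unconstrained) minimum $t$--$s_1$ edge cut uses only selection edges and hence \emph{equals} the minimum set cover. But, as you yourself observe, minimum directed $s$--$t$ cut is polynomial-time; your gadget would therefore solve \textsc{Set Cover} exactly in polynomial time without ever touching the $s_1\to s_2$ connectivity constraint. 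Put differently, the ``sub-cover ways of separating $t$ from $s_1$'' you invoke in (ii) cannot exist in your graph: every finite-weight $t$--$s_1$ separator is already a cover, so there is nothing for connectivity preservation to rule out, and the CPMEC optimum coincides with the plain min-cut optimum.

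The fix --- and this is what the paper does --- is to put the set-cover structure on the \emph{connectivity} side rather than on the $t$--$s_1$ cut side. The paper strings the element gadgets in series along the $s_1\to s_2$ path; inside each element's gadget there are parallel two-arc paths through ``set nodes,'' one for each set containing that element, and $t$ reaches each set node through a single ``set arc'' of large weight $w_in_1k$, while the internal gadget arcs have weight~$1$. To disconnect $t$ from the chain one may, for each set node, either cut its set arc (select the set) or cut its cheap outgoing gadget arc; cutting \emph{all} outgoing gadget arcs in some element's gadget, however, severs the unique $s_1\to s_2$ route through that gadget. Thus the connectivity constraint forces, for every element, at least one incident set arc into the cut --- exactly a cover --- while the unconstrained minimum cut (take all gadget arcs, total cost $<n_1k$) is genuinely cheap and is precisely what connectivity preservation excludes. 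Your plan puts the series/parallel roles the wrong way around; once you swap them as above, the reduction goes through.
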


\begin{proof}
To prove the theorem, we reduce the set cover problem
to this problem. In the set cover  problem, we have a
ground set $\mathcal{T}=\{e_1, e_2, \ldots, e_{n_1} \}$ of $n_1$ elements, and
a set  $\mathcal{S}= \{S_1, S_2, \ldots, S_k \}$ of $k$ subsets of $\mathcal{T}$ with
each $S_i \in \mathcal{S}$ associated with a weight $w_i$. The objective is to select
a set $\mathcal{O}$ of subsets in $\mathcal{S}$ so that the union of all subsets in $\mathcal{O}$ contains every element in $\mathcal{T}$ and the total weight of subsets in $\mathcal{O}$ is minimized.

Given an instance $I$ of the set cover problem with $n_1$ elements and $k$ sets, we
construct a new graph. The new graph has an element gadget for every element,
 and every element gadget contains $k_1+2$
nodes, where $k_1$ is the number
of sets that contains this element.
 In every gadget, there are two end
points, and $k_1$ internal
nodes  are connected to the two end nodes
in parallel. Every internal  nodes
of a gadget corresponds to a set that
contains this element.
 All such $n_1$
gadgets are connected sequentially
through their end points,
with $s_1$ and $s_2$ at the two ends
of the whole construction.
We also construct an arc
for every set. There is an arc
from the ending
point of every set arc
to the corresponding set node in the element gadget,
and there is another  arc
from $t$
to the starting point of every set arc.

Figure ~\ref{fig:set_di}
is the graph constructed for set cover
instance with three elements $x_1$, $x_2$, and
$x_3$, three sets $A_1=\{ x_1,x_3\}$,  $A_2=\{ x_2,x_3\}$,
and   $A_3=\{ x_1,x_2\}$.

 \begin{figure}
\centering
\includegraphics[width=3in]{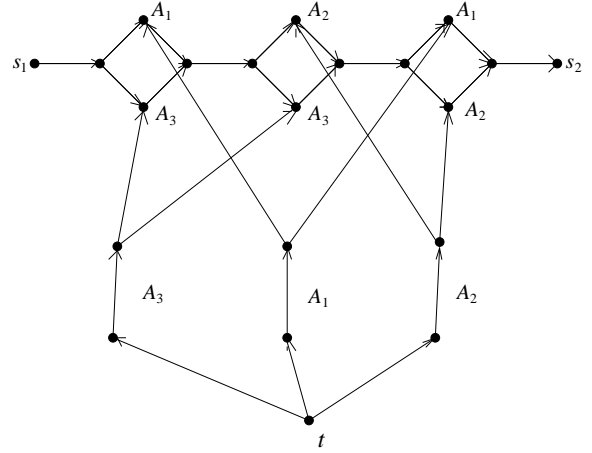}
\caption{An example illustrating the proof of Theorem \ref{thm:di}.}
\label{fig:set_di}
\end{figure}

\vspace{10pt}

 Every set arc is assigned with weight
$w_in_1k$, where $w_i$ is the weight of the
set in the original set cover instance.
All the arcs in a gadget are assigned  with
weight $1$. All other arcs ( the arcs
connecting the $s_1$, the gadgets, and $s_2$,
and the arcs connecting $t$ with the set arcs)
have weight infinity.
We also let $b=n_1kD_1 + n_1k-1$,
where $D_1$ is the bound of weight
in the set cover instance.
First note that we only need to cut $t$ from reaching $s_2$,
since there is no way for $t$ to reach $s_1$ in the
constructed graph.
Also note that one cannot
put all arcs from the set node to the
right end node of the gadget into the cut  in an element
gadget, otherwise $s_1$ and $s_2$
will be separated. Now we can see that if the set
cover instance has a cover with weight
no more $D_1$, then we can choose the
following cut:  The cut contains those
set arcs contained in the  cover and
all the gadget arcs starting with the
set node which are not
in the set cover. The cut has a weight
 $n_1kD_1 + g_1$,  where $g_1<n_1k$. Similarly
if we can find a cut with weight no more than
  $n_1kD_1 + n_1k-1$, then we can find a
set cover with weight no more than $D_1$.
Furthermore, since set cover cannot
be approximated within $\alpha logn$
for some constant $\alpha$ unless NP=P~\cite{RS97,Feige98},
we can see that the connectivity preserving minimum cut problem
in directed graph
cannot be approximated within  $\alpha_1logn$
for some constant $\alpha_1$ unless NP=P.
Suppose the optimal solution of the set
cover instance is $D$, then the optimal solution
of the constructed graph has a minimum cut
with weight $n_1kD+g_2$, where $0<g_2<n_1k$.
If we can find a cut
in which the total weight (in the set cover instance) of the set nodes
is $D_1$, then the cut has a weight  $n_1kD_1 + g_1$,
  where $0<g_1<n_1k$.
Assume   \[ \frac{n_1kD_1+g_2}{n_1kD + g_1} <\alpha_1 log(n_1k), \]
for some $\alpha_1$,
 then we have
 \[ \frac{D_1}{D} <\frac{n_1kD_1+g_2}{n_1kD + g_1}
+ o(1) < \alpha_1log(n_1k).\]

For the set cover problem with $n_1$ elements
and $k=poly(n_1)$ sets, it cannot be approximated
 within $\alpha logn_1$ unless
NP=P~\cite{RS97,Feige98}. Since $k$ is bounded by some polynomial
in $n_1$,
 we can see
   \[ \frac{D_1}{D}  < \alpha_1log(n_1k)
  \leq \alpha_1\alpha_2logn_1 ,\]
  where $\alpha_2$ is another constant. If we choose
$\alpha_1 \leq  \alpha/\alpha_2$, then
    \[ \frac{D_1}{D} \leq \alpha logn. \]
Now we have a contradiction, which means that the problem
cannot be approximated within
$\alpha logn$ unless
NP=P.
 \end{proof}

Next we consider the case of multiple  partner nodes for the CPMC problem.
In this case, we have $u+1$ nodes $s_1, \ldots, s_u,t$ in the graph, and
the objective is to find a minimum edge cut that
separates $ s_1, \ldots,  s_u$  from $t$, and at the same time
keeps   $ s_1, \ldots,  s_u$   connected.

\begin{theorem}
\label{the-ec2}
For an undirected graph $G$, the CPMC problem with $k$ partner nodes
(where $k$ is an integer, not necessarily constant, given in the input)
 cannot be approximated within a factor of $\alpha logn$ for some constant $\alpha $ unless
$NP=P$.
\end{theorem}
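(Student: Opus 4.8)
The plan is to reduce the weighted set cover problem to the multiple‑partner CPMC problem in its edge‑cut form (CPMEC), adapting the gadget from the proof of Theorem~\ref{thm:di} so that it works in an \emph{undirected} graph. From a set cover instance with ground set $\mathcal{T}=\{e_1,\dots,e_{n_1}\}$, subsets $\mathcal{S}=\{S_1,\dots,S_k\}$ and weights $w_i$, I would build one element gadget per element: the gadget for $e_j$ has endpoints $p_{j-1}$ and $p_j$ joined by $k_j$ internally‑disjoint length‑two paths, one through an \emph{internal node} for each set containing $e_j$ (where $k_j$ is the number of such sets), and every edge on such a path has weight $1$. Consecutive gadgets are concatenated by sharing endpoints, and the $n_1+1$ shared endpoints $p_0,\dots,p_{n_1}$ are declared to be the partner nodes $s_1,\dots,s_{n_1+1}$ (so the number of partners is non‑constant, as the theorem allows). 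For each set $S_i$ I add a \emph{set edge} of weight $w_iM$, with $M$ chosen larger than the total weight $W_g$ of all gadget edges; one endpoint of the set edge is joined to $t$ by an edge of weight $\infty$ (equivalently a weight exceeding the budget), and the other endpoint by weight‑$\infty$ edges to every internal node representing $S_i$. The budget is $B'=MD_1+W_g$, where $D_1$ is the set‑cover budget.

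The correctness argument mirrors the directed case. Since the only finite edges are the gadget edges and the set edges, and the edges joining $t$ to the set edges and the set edges to internal nodes are uncuttable, keeping all partner nodes $p_0,\dots,p_{n_1}$ mutually connected forces, in each gadget $e_j$, at least one internal node to retain \emph{both} of its weight‑$1$ edges — call it the live representative of $e_j$ — because $e_j$ is the only gadget meeting both $p_{j-1}$ and $p_j$; this is exactly where the non‑constant number of partner nodes does its work. The undirected substitute for the directed arcs is the observation that any path from $t$ to $\{p_0,\dots,p_{n_1}\}$ traverses a set edge followed by an internal edge, so disconnecting $t$ from the partners requires that every internal node $v$ representing a set $S_i$ in a gadget $e_j$ satisfy either ``the set edge of $S_i$ is in the cut'' or ``both edges of $v$ are in the cut.'' Applying this to the live representative of each gadget, whose edges are by construction not cut, forces its set edge into the cut; hence the set edges in any feasible cut name a family covering $\mathcal{T}$, and by the choice of $M$ that part of the cut weighs $M$ times the weight of the cover while the remaining weight‑$1$ edges weigh less than $M$. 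Conversely, a cover of weight $D_1$ yields a feasible cut of weight at most $MD_1+W_g=B'$: take the corresponding set edges, and in every gadget cut both edges of each internal node whose set is not in the cover, leaving the cover's representative live and, its set edge being cut, unreachable from $t$. Thus feasible cuts of weight $\le B'$ correspond, on the level of the sets they name, to set covers of weight $\le D_1$.

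Transferring the inapproximability then follows the end of the proof of Theorem~\ref{thm:di}: the construction is polynomial and $k=\mathrm{poly}(n_1)$, so the constructed graph has $n=\mathrm{poly}(n_1)$ nodes, and the $M$‑scaling guarantees that an $f(n)$‑approximation for multiple‑partner CPMEC would give, up to a $1+o(1)$ factor, an $f(\mathrm{poly}(n_1))$‑approximation for set cover; combined with the $\alpha\log n_1$ inapproximability of set cover unless $P=NP$~\cite{RS97,Feige98} and absorbing the polynomial blow‑up into the constant, this yields the claimed $\alpha\log n$ lower bound. I expect the main obstacle to be ruling out the undirected ``sneak‑back'' effect: unlike the directed graph, where $t$ provably cannot reach $s_1$ at all, here $t$ could in principle reach a partner through either orientation of a surviving internal node, so the delicate part is choosing the weight‑$\infty$ connectors and the scaling factor $M$ so that paying for a set edge is provably the only — and the cheapest — way to break every $t$‑to‑partner path while leaving each gadget connected, and then checking that this survives the looser additive slack $W_g$ that the undirected construction introduces in place of the tight $n_1k-1$ slack of the directed proof.
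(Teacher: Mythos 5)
Your construction is essentially the paper's: the same set-cover reduction as in Theorem~\ref{thm:di} with the set arc replaced by a weighted set edge, unit ``gadget-set'' edges, uncuttable connectors, the gadget endpoints taken as the (non-constant) partner set, and the same scaling/slack accounting. However, there is a genuine gap in the step you yourself call the crux: the claim that keeping all partners $p_0,\dots,p_{n_1}$ connected forces a ``live representative'' in every gadget ``because $e_j$ is the only gadget meeting both $p_{j-1}$ and $p_j$.'' That justification ignores the hub vertices you introduced: the set-edge endpoint $h_i$ that is joined by weight-$\infty$ (uncuttable) edges to \emph{every} internal node representing $S_i$. Once the set edge of $S_i$ is cut, $h_i$ lies on the partner side and connects endpoints of \emph{different} gadgets through surviving internal nodes of $S_i$, so $p_{j-1}$ and $p_j$ can stay connected by a detour around gadget $j$ even when every internal node of gadget $j$ has both unit edges cut. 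Concretely, take $\mathcal{T}=\{e_1,e_2,e_3\}$, $S_1=\{e_1,e_3\}$ of weight $1$ and $S_2=\{e_2\}$ of huge weight $W$: cutting the set edge of $S_1$ (weight $M$) plus the two unit edges of $S_2$'s internal node in gadget $2$ separates $t$ from all partners, and the partners remain connected via $p_1$--(internal node of $S_1$ in gadget $1$)--$h_1$--(internal node of $S_1$ in gadget $3$)--$p_2$. This feasible cut has weight $M+2\le M\cdot 1+W_g$, yet the cheapest cover has weight $W+1$, so the soundness direction ``cheap cut $\Rightarrow$ cheap cover'' fails and the set edges of a feasible cut need not name a cover at all.

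Note where the directed proof of Theorem~\ref{thm:di} escapes this: the arcs go from the hub \emph{into} the set nodes, so no directed $s_1$-to-$s_2$ path can re-enter a hub and skip a gadget, which is exactly what makes ``at least one internal path per gadget survives'' true there. The paper's own write-up of Theorem~\ref{the-ec2} is terse (``the remaining proof is the same''), so it does not spell out how the undirected construction rules out these hub detours either, but your blind proof cannot lean on that; as written, the connectivity-forcing step is false for the graph you built. Your closing worry targets the wrong side: the separation constraint (``set edge or both unit edges'') is fine, and the $M$-scaling and slack $W_g$ are fine; what needs repair is the connectivity side, e.g.\ by modifying the gadget/hub wiring (or the choice of partner nodes) so that a path between consecutive partners genuinely cannot bypass its own gadget through a hub of an already-cut set.
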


\begin{proof}
We use similar reductions as
in the proofs of Theorems \ref{thm:di}.
For the reduction of $\alpha logn$-inapproximability,
the only difference is that now the set node is replaced by a set edge,
as shown  in Figure ~\ref{fig:set_mul}.

\begin{figure}
\centering
\includegraphics[width=3in]{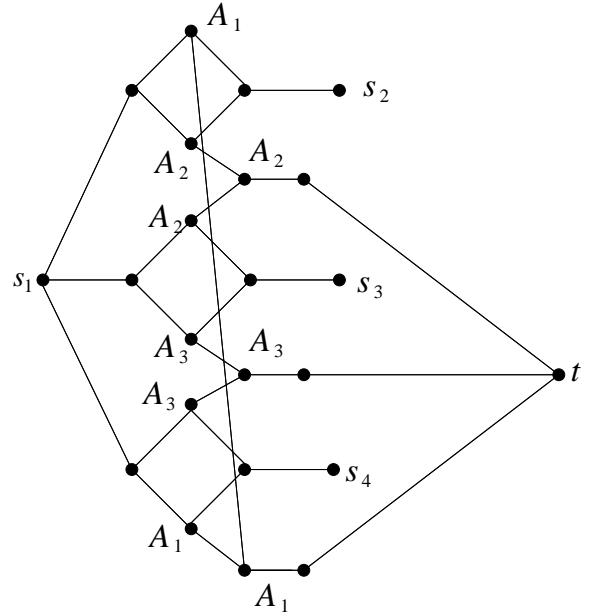}
\caption{An example illustrating the proof of Theorem \ref{the-ec2}.}
\label{fig:set_mul}
\end{figure}
As in the proof of Theorem \ref{thm:di},
the set edge has weight  $w_in_1k$.
Also in every  gadget,
the two edges connecting the end
nodes in the gadget and the set
node have weight 1, which are called gadget-set edges.
All other edges have weight infinite. Now it is easy to
see that to make $s_1$ disconnected from
$t$, for every set node  in a gadget, either the two gadget-set edges
 or the corresponding set edge must be included in the
cut. The remaining proof is the same as that in the proof of Theorem \ref{thm:di}.
\vspace{10pt}
\end{proof}

Similarly we can apply the proof of polylog factor
inapproximation  of CPMNC in~\cite{cpmc14}, using a similar
construction, to show that directed CPMC and multi-node CPMEC
cannot be approximated within a polylog factor. Due to space
limitation, it is omitted here.

\section{CPMC in Planar Graphs}
\label{sec:planar-cpmc}
 In~\cite{cpmc14} it is shown that 3-node CPMEC in planar graphs has polynomial
 time solutions and in~\cite{BL14} it is shown that
 multi-node CPMEC in planar graphs has polynomial
 time solutions.

 With some modifications,
the algorithm in~\cite{cpmc14} can also be applied
to 3-node planar CPMNC, and we can show
the algorithm also finds the optimal solution for  3-node planar CPMNC.

\begin{theorem}
The 3-node planar CPMNC can be solved in polynomial time.
\end{theorem}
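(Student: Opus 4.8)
The plan is to adapt the polynomial-time algorithm for 3-node planar CPMEC from~\cite{cpmc14} to the node-cut setting via a standard node-splitting reduction, and then to argue that the reduction preserves planarity and the connectivity-preservation constraint so that the edge-cut algorithm can be invoked as a black box. Concretely, first I would replace each node $v$ of weight $c_v$ (other than $s_1$, $s_2$, $t$) by two copies $v^{\mathrm{in}}, v^{\mathrm{out}}$ joined by an internal edge of weight $c_v$; every original edge $(u,v)$ becomes an edge $(u^{\mathrm{out}}, v^{\mathrm{in}})$ of weight infinity. The terminals are handled so that only internal edges can be cut: $s_1, s_2$ are mapped to their ``out'' copies and $t$ to its ``in'' copy, and all edges incident to terminals get weight infinity. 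A minimum CPMEC in this transformed graph then corresponds exactly to a minimum CPMNC in the original graph, since any finite-weight edge cut uses only the internal (node) edges, disconnecting $t$ from $s_1$ while keeping a path from $s_1$ to $s_2$ iff the corresponding node set does the same in $G$.

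The key steps, in order, are: (1) carry out the node-splitting construction and verify the weight-infinity assignments force the CPMEC to be a ``legal'' node cut; (2) verify planarity is preserved — splitting a vertex into an in/out pair connected by a single edge and redistributing incident edges consistently with the rotation system keeps the graph planar, which is the classical fact used in planar max-flow reductions; (3) argue the correspondence is weight-preserving and two-directional, so an optimal CPMEC solution yields an optimal CPMNC solution and vice versa; (4) invoke the 3-node planar CPMEC algorithm of~\cite{cpmc14}, noting it runs in polynomial time on the transformed instance, whose size is linear in that of $G$; (5) translate the returned edge cut back to a node cut. One should also dispose of the trivial infeasible case (when $t$ is connected to a bridge node between $s_1$ and $s_2$), which as noted earlier in the excerpt is detectable in polynomial time.

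The main obstacle I anticipate is step (2) together with a subtlety in step (1): when we split a vertex, we must choose how to partition its incident edges between $v^{\mathrm{in}}$ and $v^{\mathrm{out}}$ so that (a) the embedding remains planar and (b) no finite-weight cut can ``cheat'' by separating $v^{\mathrm{in}}$ from part of the graph without paying $c_v$. In the directed max-flow reduction this is automatic because in-edges go to $v^{\mathrm{in}}$ and out-edges leave $v^{\mathrm{out}}$; in an undirected graph every incident edge must be duplicated (or the node split must be done by placing $v^{\mathrm{in}}$ and $v^{\mathrm{out}}$ adjacent in the face structure) so that cutting around $v$ necessarily crosses the weight-$c_v$ internal edge. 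I would handle this by using the symmetric node-splitting gadget: route every original edge $(u,v)$ as $u$--(copy)--$v$ through both split pairs, i.e., $(u^{\mathrm{out}},v^{\mathrm{in}})$ \emph{and} the embedding is arranged so $v^{\mathrm{in}},v^{\mathrm{out}}$ are consecutive in the local rotation; this is exactly the gadget that appears in planar vertex-capacitated flow results, and its planarity is preserved because it is a local modification respecting the cyclic edge order. Once this gadget is in place, the remaining arguments (feasibility, optimality, reduction to~\cite{cpmc14}) are routine, mirroring the edge-cut analysis already established there.
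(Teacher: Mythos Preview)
Your reduction has a real gap at precisely the point you flag as the main obstacle, and the fix you sketch does not close it. The node-splitting trick that turns vertex capacities into edge capacities is sound for \emph{directed} graphs because every walk through $v$ is forced to traverse the arc $v^{\mathrm{in}}\!\to v^{\mathrm{out}}$. In an undirected graph there is no canonical in/out partition of the incident edges, and no local undirected gadget of the sort you describe can simulate vertex deletion: cutting a single undirected edge inside any replacement subgraph for $v$ splits that subgraph into two pieces and hence only bipartitions the neighbors of $v$, whereas deleting $v$ must simultaneously sever every neighbor--neighbor route through $v$. Concretely, if for each original edge $\{u,v\}$ you add both undirected edges $\{u^{\mathrm{out}},v^{\mathrm{in}}\}$ and $\{v^{\mathrm{out}},u^{\mathrm{in}}\}$ (the ``symmetric'' version you gesture at), then the walk $s_1^{\mathrm{out}}-u^{\mathrm{in}}-v^{\mathrm{out}}-t^{\mathrm{in}}$ reaches $t$ without touching any internal edge, so even cutting every internal edge fails to separate $s_1$ from $t$. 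If instead you add only one of the two, the correctness depends on an arbitrary orientation choice and fails on easy examples. Your appeal to ``the gadget that appears in planar vertex-capacitated flow results'' does not point to an actual construction; planar vertex-capacitated flow is handled by direct dual-graph or shortest-path arguments, not by a black-box reduction to undirected edge capacities. Finally, even if some gadget worked for ordinary $s$--$t$ min cut, you would still owe a separate argument that the $s_1$--$s_2$ connectivity constraint is faithfully preserved through it, and nothing in the proposal addresses that.

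The paper does not attempt a reduction. It reopens the analysis of the planar CPMEC algorithm from~\cite{cpmc14} and re-establishes the key structural lemma directly for node cuts: after perturbing node weights so that all cuts have distinct cost, it defines for each $v$ the \emph{principal cut component} (the component containing $v$ after the minimum node cut separating $v$ from $t$) and shows by an exchange argument that two such components can never enclose a hole. Once this no-holes property is in hand, the dynamic-programming scheme from the CPMEC case carries over verbatim with principal cut components in place of the edge-cut regions. If you want a reduction-flavored proof, you would need to target a \emph{directed} planar CPMEC primitive (and first show one exists in $P$), or else abandon the gadget idea and argue structurally as the paper does.
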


\begin{proof}
The major difference between edge cut and node cut
is that in any node cut the graph may
be disintegrated into multiple connected components.
For any cut between a node $v$ and destination node $t$, we define
the connected component containing $v$ after the cut as the principal
cut component of node $v$.
We also introduce the perturbation
technique of node weight that we used in the proof of 3-node planar
CPMEC in~\cite{cpmc14}.
 In this way, any two cuts (or more generally,
two subsets of edges) in $G$ will
 have different weights unless they are completely identical. Based on this property, we
 have the following observations. (1) Any cut  is unique. (2) Given
 any node $v\in V$, let $C_{v}$ be the connected component containing $v$ and resulting
from the minimum edge cut between $v$ and $t$. Then all
 nodes in $C_v$ can be uniquely determined due to the  perturbation technique.
Now note that any two principal cut components cannot enclose a hole.
Suppose the CPMNC principal cut component  between nodes $s_1,v$ and
$t$ is $\tilde{C}_{s_1,v}$. If there exists a hole that is completely surrounded by the two CPMNC
 principal cut components $\tilde{C}_{s_{1},A}$ and $\tilde{C}_{s_{1},A_{1}}$,
then for the nodes adjacent to the boundary of the hole, we can divide the nodes into
 3 types: The first type of  is in the CPMNC cutting
of $\tilde{C}_{s_{1},A}$ but not cutting nodes of  $\tilde{C}_{s_{1},A_{1}}$. We denote the total weight
of this type of nodes as $L$. The second type  is the cutting nodes
of $\tilde{C}_{s_{1},A_1}$ but not cutting nodes of  $\tilde{C}_{s_{1},A}$. We denote the total weight
of this type of nodes as $L_1$. The third type of segments is the cutting nodes of both
of $\tilde{C}_{s_{1},A_1}$  and  $\tilde{C}_{s_{1},A}$.
We have $L>L_1$,  $L_1>L$ or $L_1=L$. If $L>L_1$ then we remove the cutting nodes
of type 1 and add cutting nodes of type 2 for the CPMNC between $s_1,A$ and $t$.
Now we can get a smaller CPMNC, because
the new cut will decrease by a value of $L$ and increase  by a value of $L_1$,
and the overall effect is that the value of the cut will decrease by at least
$L-L_1$. This is a contradiction. Note that the principal
cut component $\tilde{C}_{s_{1},A}$ may not enlarge since the
added nodes in the hole may not connect to component $\tilde{C}_{s_{1},A}$, but
this does not affect the argument. The remaining two cases are similar.
So for planar CPMNC, any two principal cut components cannot enclose a hole.
The remaining proof of the 3-node planar CPMNC is similar to 3-node planar CPMEC
when we replace the enclosed edge cut region  $C_{s_{1},v}$ with principal cut component
$\tilde{C}_{s_{1},v}$ for any node $v$ which is not $t$.

\end{proof}

Next we show that 2-node versus 2-node planar CPMEC
is in $P$. Note that the problem is  an NP optimization
problem since the feasibility of disjoint paths for two source-destination
pairs in planar graphs can be determined in P~\cite{KS10}.

\begin{theorem}
Let $G=(V,E)$ be a planar graph,  and $s_{1}, s_{2}$ and $s'_{1}, s'_{2}$ be two pairs of  partner nodes in $G$.
Then  the CPMEC separating partner nodes $s_{1}, s_{2}$ from  partner nodes $s'_{1}, s'_{2}$  can be computed
in polynomial time.
\end{theorem}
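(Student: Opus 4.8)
The plan is to solve this through planar duality, reducing the CPMEC to the search for a minimum-weight separating cycle in the dual graph whose topological type is dictated by the two partner pairs, and to carry out that search with a Reif-style divide-and-conquer, reusing the toolkit (weight perturbation, the ``no enclosed hole'' structural property) already developed for the $3$-node planar cases.

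\medskip
First I would record the reductions that make the problem well posed and purely combinatorial. Since any minimum CPMEC is an inclusionwise-minimal edge cut, it is a bond $\delta(S)$ with $s_{1},s_{2}\in S$, $s'_{1},s'_{2}\in V\setminus S$, and both $G[S]$ and $G[V\setminus S]$ connected; dually it is a \emph{simple} cycle $\gamma$ in the planar dual $G^{*}$ that, drawn in the plane, encloses $\{s_{1},s_{2}\}$ and excludes $\{s'_{1},s'_{2}\}$. So the task is: among all simple dual cycles enclosing the first pair and excluding the second, find one of minimum weight; by~\cite{KS10} we may check in polynomial time that at least one exists (otherwise the instance is infeasible and there is nothing to compute), exactly as the theorem statement notes. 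As in the proof of the $3$-node planar results I would then apply the edge-weight perturbation so that distinct edge subsets receive distinct weights; this makes the optimal cut unique and, for each vertex $v$, makes ``the'' component containing $v$ in a relevant minimum cut well defined, so that one may argue about a canonical optimum.

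\medskip
Next I would pin down the optimal dual cycle combinatorially. Fix a reference path $P$ in $G$ from $s_{1}$ to $s'_{1}$; the optimal $\gamma$ must cross $P$. Guessing a crossed edge $e$ of $P$ (there are $O(|P|)$ choices) fixes one dual edge $e^{*}$ that $\gamma$ uses, and after contracting the part of $G$ already forced onto the $s_{1}$-side and deleting the part forced onto the $s'_{1}$-side one recurses on the two halves of $P$, in the manner of Reif's planar min-cut algorithm; the enclose/exclude requirements for the \emph{other} two marked vertices $s_{2}$ and $s'_{2}$ are maintained by carrying a constant number of crossing-parity bits relative to a fixed $s_{1}$--$s_{2}$ path and a fixed $s'_{1}$--$s'_{2}$ path, i.e.\ by working in a constant-fold covering of the dual and computing shortest paths there. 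Because only a constant number of reference paths/parity bits is ever involved and the recursion has logarithmic depth, the whole procedure runs in polynomial time.

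\medskip
The main obstacle — and the step that genuinely needs the planar structure — is proving that the object this procedure returns is an \emph{honest bond} whose $S$-side is connected and contains $s_{1},s_{2}$ and whose complement is connected and contains $s'_{1},s'_{2}$, rather than a cheaper degenerate cut that has the right crossing parities while, say, disconnecting $s_{1}$ from $s_{2}$. This is precisely why one may not simply contract each partner pair into a super-vertex and invoke an ordinary planar min-cut routine: contraction destroys the very connectivity the problem insists on (and need not preserve planarity), and a naive multi-source/multi-sink cut can split a pair. To close this gap I would establish the two-pair analogue of the ``no two principal cut components enclose a hole'' lemma used for $3$-node planar CPMNC/CPMEC: using the perturbation, an exchange argument shows that if the candidate cut induced an extra connected component — whether one containing none of the four marked nodes, or one that separated $s_{1}$ from $s_{2}$ or $s'_{1}$ from $s'_{2}$ — then rerouting the cut along the boundary of that component strictly decreases its weight while preserving all four separation/connectivity requirements, contradicting optimality. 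Hence both sides are forced to be connected and to contain their designated pair, so the returned cut is a valid CPMEC, and the procedure above computes it in polynomial time.
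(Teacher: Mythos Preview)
Your approach is genuinely different from the paper's, and it contains a real gap at exactly the point you flag as ``the main obstacle.''

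The paper does \emph{not} pass to the dual or use a Reif-style recursion. Instead it runs the dynamic region-growing procedure of~\cite{cpmc14} for $3$-node planar CPMEC essentially verbatim, with one change: every time the $3$-node algorithm would compute an ordinary minimum cut between the current (shrunk) region and the destination, the paper replaces that call by a $3$-node CPMEC computation between the current region and the pair $s'_1,s'_2$. A small amount of extra bookkeeping is needed when the growing region first touches the face of $s_2$ (two candidate cuts, clockwise and counterclockwise along that face, are compared). The point is that connectivity of both sides is maintained \emph{by construction} throughout the growth: the $s_1$-side is a single growing region, and the $s'_1$-side is kept connected because the subroutine is itself a connectivity-preserving cut. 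No post-hoc exchange argument is needed.

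Your approach, by contrast, computes a shortest closed walk in a constant-fold cover of the dual subject to three crossing parities. But the minimum-weight object with parities $(1,0,0)$ is exactly the minimum $\delta(S)$ with $s_1,s_2\in S$ and $s'_1,s'_2\notin S$, i.e.\ the ordinary (non-connectivity-preserving) minimum edge cut separating the two pairs. Your exchange argument correctly disposes of components of $G[S]$ or $G[V\setminus S]$ that contain \emph{none} of the four marked vertices (moving such a component across strictly decreases $|\delta(S)|$, contradicting minimality). It does \emph{not} handle the case where $G[S]$ splits into two components with $s_1$ in one and $s_2$ in the other: moving either component across violates a parity constraint, and there is no local reroute that merges them without possibly increasing the cut weight. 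Concretely, the min cut can be the disjoint union of a small cut around $s_1$ and a small cut around $s_2$, and this can be strictly cheaper than every valid CPMEC; your algorithm would return it. So the covering/parity reduction computes the wrong functional, and the gap is not closed by the ``no enclosed hole'' lemma you invoke.

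If you want to stay in the dual picture, what is needed is a mechanism that forces the dual cycle to be \emph{simple} (equivalently, $G[S]$ and $G[V\setminus S]$ connected), not merely in the right $\mathbb{Z}_2$-homology class; this is precisely the content of the $3$-node planar CPMEC algorithm, and it is why the paper uses that algorithm as its primitive rather than a plain shortest-path computation.
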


\begin{proof}
First it is easy to see if $s_1$ and  $s_2$ are in the same face, then it is trivial, so
we ignore this case. Now suppose $s_1$ and $s_2$ are not in the
same face.
In the  algorithm for 3-node planar CPMEC, for every places when we
need to compute a minimum cut between a shrunk node
and the destination $t$ we can change it to ``compute the CPMEC between the shrunk node
and the two partner nodes $s'_1 , s'_{2}$''.
The algorithm grows from $s_1$ and trying to reach $s_2$.
     During the dynamic growing algorithm, when
 procedure ``touches'' any node $\overline{s}$ (when
 trying to grow to a new neighbor with the expanded cut) 
   that is in the
same face as $s_2$, then we need to consider
the two cuts. 

The first cut is  
between set   ``the current cut region from $s_1$ to $\overline{s}$  
union with all nodes  in the same face as $s_2$ grows clockwise from $\overline{s}$ to $s_2$''
and $s'_1,s'_{2}$, while
$s'_1$ and $s'_2$ are connectivity preserving,

The second cut is  
between set   ``the current cut region from $s_1$ to $\overline{s}$  
union with all nodes  in the same face as $s_2$ grows counterclockwise from $\overline{s}$ to $s_2$''
and $s'_1,s'_{2}$, while
$s'_1$ and $s'_2$ are connectivity preserving,
  We choose the smaller cut of these two as $min(\overline{s})$.

So if $min(\overline{s})$ is the smallest cut in all possible growing
choices, then we select $\overline{s}$ as the next growing point
and choose the corresponding cut to reach $s_2$.
Otherwise choose the other growing node that does not
touch the same face of $s_2$.

Note that if the dynamic growing algorithm
 ``touches'' multiple nodes in the same face  of $s_2$,
then all these nodes must be connected (otherwise there will be
hole), and we just consider these nodes as a single node and
continue with the same  procedure described above.

   Now we can see  if  during the growing process,
we never touch nodes in the same face of $s_2$
(except the last step, that we touch $s_2$), then $s'_1$ and $s'_2$
will always be connected during the procedure, and
we find the correct cut.

If during the   growing process,
we  touch nodes in the same face of $s_2$, then
we apply the above procedure. If this node $\overline{s}$ is
selected as the next growing node, then
the growing cut will include $s_2$, and $s'_1$ and $s'_2$ will
be connected, we find the correct cut again.
In this way, we can solve the 2-versus-2 case.

  If we directly apply the original procedure without
this modification, we may not be able to continue in
some intermediate step but the satisfying cut
still exists and we will miss it.
We have the following observations: (1) In every
step of the path growing procedure, the node $s_2$
may either be inside the component that contains
$s'_1$ and $s'_{2}$ defined by the new found CPMEC
 or inside the component that contains
$s_1$. (2) If at any step, the path cannot grow
and node $s_2$ still not been added, this means
it is infeasible to find a satisfiable cut. However,
this  infeasibility can be verified at the beginning
of the algorithm. So if there exists a feasible  cut,
this situation will not happen.
Since we can compute the
3-node planar CPMEC in polynomial time, the procedure
will be able to find the desired cut in polynomial time.
\end{proof}

Note that this result immediately implies that
the network diversion  problem in planar graphs is in $P$, since
the network diversion problem is a special case of
 2-node versus 2-node CPMEC.

\begin{corollary}
The network diversion problem in planar graphs can be solved in polynomial
time.
\end{corollary}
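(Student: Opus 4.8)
The plan is to reduce the network diversion problem on a planar graph directly to the 2-node versus 2-node planar CPMEC problem of the previous theorem, which is solvable in polynomial time. Recall the (undirected, edge-weighted) network diversion problem: given a planar graph $G=(V,E)$, a source $s$, a sink $t$, and a distinguished diversion edge $e=(u,v)$, find a minimum-weight set $D\subseteq E\setminus\{e\}$ whose removal leaves at least one $s$--$t$ path but forces every surviving $s$--$t$ path to traverse $e$. (If a variant with a diversion \emph{set} of edges or with node weights is intended, a routine preprocessing step -- contracting the diversion target to a single edge and splitting parallel edges as in the proof of Theorem~\ref{thm:shr} -- reduces it to this single-edge, edge-weighted form, so we assume this form.)

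First I would establish the structural equivalence. Since $e\notin D$, it is legitimate to argue inside the planar graph $G'=G-e$ (edge deletion preserves planarity). The requirement ``every surviving $s$--$t$ path uses $e$'' is exactly ``$D$ disconnects $s$ from $t$ in $G'$'', because deleting $e$ from $G-D$ destroys precisely the paths through $e$. Assuming $D$ separates $s$ from $t$ in $G'$, a surviving $s$--$t$ path in $G-D$ exists if and only if, inside $G'-D$, $s$ lies in the same component as one endpoint of $e$ and $t$ lies in the same component as the other endpoint. Hence a valid diversion $D$ is precisely an edge cut of the planar graph $G'$ separating the partner pair $\{s,u\}$ from the partner pair $\{v,t\}$ while keeping $s$ connected to $u$ and $t$ connected to $v$ -- or the symmetric instance with pairs $\{s,v\}$ and $\{u,t\}$ -- that is, a feasible solution of 2-node versus 2-node planar CPMEC on $G'$.

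The algorithm is then immediate: construct $G'=G-e$, run the polynomial-time 2-versus-2 planar CPMEC algorithm on $G'$ once with partner pairs $(\{s,u\},\{v,t\})$ and once with $(\{s,v\},\{u,t\})$, and return the lighter of the two resulting cuts, or report infeasibility if neither instance is feasible (detectable in polynomial time via the planar disjoint-paths test already invoked for 2-versus-2 CPMEC). Optimality holds because, by the equivalence above, every valid diversion is a feasible solution of one of the two CPMEC instances and every feasible solution of either instance is a valid diversion; the total running time is polynomial.

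The correctness argument -- not an algorithmic obstacle -- is the delicate part: one must verify carefully that ``$s$ stays connected to $u$ \emph{and} $t$ stays connected to $v$'' is both necessary and sufficient for ``a surviving $s$--$t$ path through $e$ exists'', which is exactly what forces the reduction to target the connectivity-preserving cut rather than an ordinary minimum $s$--$t$ cut of $G'$. Beyond this, only degenerate configurations need attention -- $s=u$, $t=v$, other coincidences among $s,u,v,t$, or $s$ and $u$ (resp. $t$ and $v$) sharing a common face -- and these are handled exactly as the trivial and small cases are handled in the proof of the 2-versus-2 CPMEC theorem.
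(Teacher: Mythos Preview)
Your reduction is correct and is essentially the same approach as the paper's: both observe that network diversion in planar graphs is an instance of 2-node versus 2-node planar CPMEC, with the diversion edge's endpoints playing the role of the second partner in each pair. Your version is more explicit---you delete $e$ first, argue the structural equivalence carefully, and run the CPMEC routine twice to account for both pairings of $\{u,v\}$ with $\{s,t\}$---whereas the paper's one-line proof simply asserts the special-case relationship (with $s_1$ and $s'_1$ being the adjacent endpoints of the diversion edge) without spelling out these details.
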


\begin{proof}
The network diversion problem in planar graphs is a special case
of the 2-node versus 2-node CPMEC in planar graphs where
node $s_1$ and $s'_1$ is connected.
\end{proof}

Another result is that the two node location constrained shortest path (LCSP)
problem~\cite{cpmc14} is in $P$. The original definition of LCSP
defines the shortest path where one node should be above the path. The
  two node LCSP problem is to  find the shortest path where one given
node is located above the path, another given node is located below the path.
\begin{corollary}
The two node  LCSP problem in planar graphs can be solved in polynomial
time.
\end{corollary}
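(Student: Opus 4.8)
The plan is to reduce the two-node LCSP problem to the 2-node versus 2-node planar CPMEC problem (just proved to be in $P$), in analogy with how the ordinary LCSP problem reduces to 3-node planar CPMEC in~\cite{cpmc14}. Recall the setup: we are given a planar graph $G$ (embedded in the plane), a source $\sigma$ and a sink $\tau$ between which we seek a shortest path, and two distinguished nodes $a$ and $b$ with the requirement that the path keeps $a$ strictly on one side (``above'') and $b$ strictly on the other side (``below''). The standard duality for planar shortest paths says that an $\sigma$--$\tau$ path in $G$ corresponds to a cut in a suitably modified dual graph; the constraint ``$a$ lies above the path'' translates into ``the cut separates the dual vertex for $a$ from ...'', and this is exactly the kind of side constraint that a connectivity-preserving cut encodes.

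First I would make the planar-duality reduction precise. Add an auxiliary edge (or a path through the outer face) from $\sigma$ to $\tau$ so that $G$ together with this edge has $a$ and $b$ in two different faces; then in the planar dual $G^{*}$, an $\sigma$--$\tau$ shortest path in $G$ of length $W$ corresponds to a minimum cut of weight $W$ separating the two dual vertices adjacent to the auxiliary edge. Call those two dual vertices $f_{\sigma}$ and $f_{\tau}$. A path that passes ``above'' $a$ and ``below'' $b$ corresponds, in the dual, to a cut that places the dual-face vertex $f_a$ (the face of $G^{*}$ corresponding to $a$, using the vertex-face incidence of the embedding) on the $f_{\sigma}$-side and $f_b$ on the $f_{\tau}$-side — i.e., after removing the cut, $f_{\sigma}$ stays connected to $f_a$ and $f_{\tau}$ stays connected to $f_b$. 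That is precisely an instance of 2-node versus 2-node planar CPMEC with partner pair $\{f_{\sigma}, f_a\}$ on one side and partner pair $\{f_{\tau}, f_b\}$ on the other.

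Then I would verify the correspondence is bijective and weight-preserving in both directions: every CPMEC solution in the dual lifts back to a valid LCSP path with the same length, and conversely. For the forward direction one uses that a minimal edge cut in a planar graph separating $f_{\sigma}$ from $f_{\tau}$ has exactly two connected components (as noted earlier in the CPMEC discussion in the excerpt), so it dualizes to a simple $\sigma$--$\tau$ path; the connectivity-preserving conditions force $a$ and $b$ onto the correct sides. For the converse, an LCSP path with $a$ above and $b$ below dualizes to an edge cut whose two sides contain $f_a$ and $f_b$ respectively, hence is a feasible CPMEC. Since the reduction is polynomial-time and the dual graph is still planar, applying the previous theorem gives the result; I would also note the degenerate cases ($a$ or $b$ already incident to the face of $\sigma$ or $\tau$) are trivially handled as in the proof of the 2-versus-2 theorem.

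\textbf{The main obstacle} I expect is getting the ``strictly above / strictly below'' semantics exactly right under the duality — in particular handling the case where $a$ or $b$ is incident to several faces, or where the shortest path is allowed to touch $a$ or $b$ but not cross to the wrong side. This requires the same vertex-weight / edge-subdivision perturbation trick used in the earlier planar proofs: subdivide edges incident to $a$ and $b$ and assign infinite weight to the new intermediate vertices so that no minimum cut ever uses them, which pins down on which side $a$ and $b$ land and rules out boundary ambiguities. A secondary subtlety is ensuring the added $\sigma$--$\tau$ auxiliary structure does not create a shortcut that a cut could exploit; giving the auxiliary edge infinite dual weight (equivalently, never cutting it) takes care of that. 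Once these bookkeeping points are settled, the theorem follows immediately from the polynomial-time algorithm for 2-node versus 2-node planar CPMEC.
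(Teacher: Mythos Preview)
Your proposal is correct and shares the paper's high-level strategy—reduce two-node LCSP to an instance of 2-versus-2 planar CPMEC and invoke the preceding theorem—but the concrete reduction is different. The paper stays entirely in the primal graph: it attaches a dummy node $s_1$ to all outer-boundary vertices lying ``above'' the two path endpoints and a dummy node $s'_1$ to all outer-boundary vertices ``below,'' then takes the two location-constraint vertices as $s_2$ and $s'_2$; the desired constrained shortest path is then exactly a CPMEC separating $\{s_1,s_2\}$ from $\{s'_1,s'_2\}$ (with $s_1,s'_1$ on the same face). You instead pass through the planar dual, converting the shortest-path problem into a min-cut problem there and reading the above/below constraints as connectivity requirements on dual face-vertices $f_a,f_b$. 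Both reductions land on the same target problem. The paper's primal construction is shorter and sidesteps the issue you correctly flag—namely that a primal vertex $a$ corresponds to a \emph{face} of $G^*$, not a single dual vertex, so one must either contract all dual vertices bounding that face or argue carefully about which one to pick. Your dual route, on the other hand, makes the path/cut duality explicit and would generalize more readily if one wanted different boundary conditions. Either argument is acceptable; the paper's is simply more terse because it piggybacks on the LCSP-to-CPMEC reduction already worked out in~\cite{cpmc14}.
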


\begin{proof}
The two node LCSP problem  is a special case of the 2-node versus 2-node CPMEC in planar graphs.
If we add a dummy node $s_1$ that
connects to all nodes that are in the boundary of the graph and above the
two end points of the path, and   a dummy node $s'_1$ that
connects to all nodes that are in the boundary of the graph and below the
two end points of the path, and consider the
two position defining nodes as $s_2$ and $s'_2$ respectively, then the
problem becomes a special case of the 2-node versus 2-node CPMEC in planar graphs
where $s_1$ and $s'_1$ are in the same surface.
\end{proof}

\section{Threshold Minimum Cut Problem}
\label{sec:thresh}

 We first formalize the  threshold minimum node cut problem as follows.

 Suppose we have a graph $G=(V,E)$ and a set of  $k$
service nodes $\Gamma=\{S_1,S_2,\ldots ,S_k\}$,
($S_i \in V$ for $1\leq i \leq k$),
a client node $A \in V$, and a threshold
 integer $l$. Every node
$v_i$ in $G$ has an associated cost
$c_i$, how to find a minimum
cost node cut such that at least
$l$ out of the $k$  service nodes
will be disconnected from $A$?

The decision version of the problem is:

  Suppose we have a graph $G=(V,E)$ and a set of  $k$
service nodes $\Gamma=\{S_1,S_2,\ldots ,S_k\}$,
($S_i \in V$ for $1\leq i \leq k$),
a client node $A \in V$,  a threshold
 integer $l$, and another integer
$B$. Every node
$v_i$ in $G$ has an associated cost
$c_i$. Can one find a
 node cut such that at least
$l$ out of the $k$  service nodes
will be disconnected from $A$ and the total
cost of the cut is no more than $B$?

 We name this problem as the threshold minimum
node cut problem (TMNC). The threshold minimum
edge cut problem (TMEC) can also be similarly
defined, the only difference is that
every edge in $G$ has a cost and  the
cut is an edge cut.

\subsection{Hardness of the  TMC problem}
The  TMNC problem was shown to be NP-complete in~\cite{DM13}.
 We can also show that the TMEC problem is NP-complete.

\begin{theorem}
The TMEC problem is NP-complete.
\end{theorem}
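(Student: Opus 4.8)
The plan is to prove membership in NP and then NP-hardness by reduction. Membership in NP is immediate: given a candidate edge set $C$ with total weight at most $B$, one can verify in polynomial time (via graph search in $G \setminus C$) both that the total weight constraint holds and that at least $l$ of the service nodes $S_1,\ldots,S_k$ lie in a different connected component from $A$. For NP-hardness I would reduce from a known NP-complete problem; the most natural choice is the densest-subgraph-style or clique-cover flavor, but the cleanest route is to reduce from \textbf{Vertex Cover} or, even better, to mimic the reduction already used for TMNC in~\cite{DM13} and adapt it to edges. Concretely, I would start from an instance of Vertex Cover (graph $H$, integer $q$) — or alternatively from the \textsc{Clique} problem — and build a graph $G$ together with service nodes, a client node $A$, a threshold $l$, and a budget $B$ so that a feasible TMEC solution of weight at most $B$ exists if and only if the source instance is a yes-instance.

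The key steps, in order, would be: (1) construct the client node $A$ and attach to it, for each element of the ground structure (e.g.\ each vertex of $H$), a "selector" edge whose removal is the combinatorial act of "choosing" that element; assign these selector edges weight $1$; (2) build, for each constraint to be satisfied (e.g.\ each edge of $H$, or each element that must be "covered"), a service node $S_j$ connected to $A$ by two (or more) parallel paths, one through each selector edge associated with that constraint, and make all edges on these paths except the selectors have weight $\infty$ (or a weight large enough, $> B$, to forbid cutting them); (3) set the threshold $l$ equal to the number of constraints (so \emph{all} service nodes must be disconnected) and set the budget $B = q$, matching the cover size. Then an edge cut of weight $\le B$ must consist only of selector edges, at most $q$ of them, and it disconnects $S_j$ from $A$ exactly when every $A$--$S_j$ path is broken, i.e.\ when the chosen selectors form a cover of constraint $j$. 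This makes the equivalence transparent: yes-instances of Vertex Cover of size $q$ correspond precisely to feasible TMEC cuts of weight $q$.

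The step I expect to be the main obstacle is getting the gadget for each service node right so that the \emph{edge}-cut structure genuinely forces the combinatorial choice, unlike the node-cut case: in TMEC, cutting one edge on a path suffices to kill that path, so I must ensure the \emph{only} cheap way to kill all $A$--$S_j$ paths is to remove the shared low-weight selector edges and not some incidental edge. This is handled by the $\infty$-weight (or $>B$) padding on every non-selector edge and by routing every $A$--$S_j$ path through exactly the selector edges of the corresponding constraint; one must double-check that no "shortcut" connection between two service gadgets inadvertently creates a cheaper cut or an unintended reconnection. A secondary technical point is to confirm that using large finite weights (rather than literal $\infty$) keeps all numbers polynomially bounded in the input size, which it does since $B \le |V(H)|$ and the padding weight need only exceed $B$. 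Once the gadget is verified, the forward and backward directions of the equivalence are routine, and NP-completeness follows.
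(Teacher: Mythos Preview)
Your proposal has a genuine gap in the NP-hardness direction. With the parallel-path gadget you describe, disconnecting $S_j$ for an edge $j=(u,v)$ of $H$ requires breaking \emph{both} of its $A$--$S_j$ paths, hence cutting the selectors for \emph{both} $u$ and $v$; a vertex cover needs only one endpoint per edge, so the claimed equivalence already fails at this point. More fatally, the reconnection issue you flag is real and is not handled by your construction. Since the selector for a vertex $u$ must be a \emph{single} weight-$1$ edge shared by all edges of $H$ incident to $u$, you are forced into a layout $A\text{--}x_u$ (selector) followed by heavy edges $x_u\text{--}S_e$ for every $e\ni u$. These heavy edges then create paths $x_u\text{--}S_{(u,v)}\text{--}x_v\text{--}S_{(v,w)}\text{--}x_w\text{--}\cdots$ inside the gadget. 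Consequently, after cutting the selectors for a set $C\subseteq V(H)$, a service node $S_e$ is disconnected from $A$ only when the \emph{entire} connected component of $H$ containing $e$ lies in $C$; the resulting TMEC instance is therefore trivial regardless of whether you aim for Vertex Cover or Densest-$k$-Subgraph. No amount of weight padding fixes this in an undirected graph, and the gadget would have to be redesigned from scratch.

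The paper sidesteps all of this by reducing from Minimum Bisection instead: given $G=(V,E)$ with $|V|=n$ even, add a client $A$ joined to every vertex of $G$ by an edge of cost $n^{2}$, declare every vertex of $G$ a service node, and set $l=n/2$. The heavy $A$-edges force any optimal threshold edge cut to sever exactly $n/2$ of them, and the remainder of the cut inside $G$ is then exactly a minimum bisection, giving the correspondence between bisection value $u$ and threshold-cut value $u+n^{3}/2$.
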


\begin{proof}
We can use the reduction from minimum bisection problem (with unit edge cost).
Given an instance of minimum bisection problem
$G=(V,E)$, we can construct an instance
of the threshold minimum
edge cut problem. We  construct a new graph $G'$ which contains all nodes in
$G$ with an additional node $A$. Node $A$ is connected with
every node in $G$ with an edge of cost $n^2$, where $n=|V|$.
Without loss of generality, we assume $n$ is
an even number.
We also define the set $\Gamma$ of the TMEC instance to be $V$, and $l=n/2$.
We have the following observations:

\begin{itemize}
\item{The threshold
minimum cut
in $G'$ will make $A$ to be separated from
at least $n/2$ nodes in $V$, that is, the threshold
minimum cut will have a value that is at least $n^2n/2=n^3/2$. }
\item{The threshold minimum  cut
in $G'$ will not make $A$ to be separated from more than
$n/2$ nodes in $V$ since every additional edge
in $G'$ has cost $n^2$, which is more than the number of
total edges in $G$.}
\end{itemize}

 Now we can see that the minimum cut in the threshold minimum
edge cut instance instance should include a minimum
bisection of $G$ and exactly $n/2$ edges between
$A$ and  nodes in $V$. If we can find
a threshold cut in $G'$ that is $u+n^3/2$, we can
find the corresponding bisection of $G$ with value $u$.
Conversely, if there exists a  bisection of $G$ with value $u$,
we can find threshold cut in $G'$ with value $u+n^3/2$.
 This finishes the reduction.

\end{proof}

 Based on  the NP-completeness
of the threshold minimum node cut problem, it is easy to see the
following problem is NP-complete~\cite{DM13}.

\begin{definition}
Given an undirected  graph $G=(V,E)$ and
a number $m$, can one find a subgraph
with at least $m$ edges, and the
number of nodes in the subgraph is minimized?
\end{definition}

Note that this is the inverse problem
of the maximum k-subgraph problem (unit weight
case). And this inverse k-subgraph problem
can be further generalized to
the following set minimum cover problem~\cite{DM13}:

\begin{definition}
Given a set $S$ of $n$ elements,
a collection $C$ of $m_1$ subsets of $S$, and positive integer
 $m \leq m_1$,
can one find $m$ subsets from $C$  such that
the total number of distinct elements in the union
of the $m$ subsets
is minimized?
\end{definition}

 We can see the set minimum cover problem is the
generalization of the inverse k-subgraph problem,
so  the set minimum cover problem is also
NP-complete.

Similarly we can define the  inverse problem of the  set minimum cover
 problem as follows~\cite{DM13}.
\begin{definition}
Given a set $S$ of $n$ elements,
a collection $C$ of $m_1$ subsets of $S$, and a positive integer
 $n_1 \leq n$,
can one find a subset $S'$ from $S$  such that
the size of  $S'$  is $n$
and the number of subsets (from $C$) fully covered by $S'$
is maximized?
\end{definition}

Here ``fully covered'' means that every element of the subset
is included in $S'$.
We denote this problem as the  set maximum cover problem. We can see
this problem is the generalization of the
 maximum k-subgraph problem.

If every set in the set minimum cover problem
has at most $\tau$ elements, we denote this special
case as the $\tau$-minimum cover problem.
Similarly we can define the $\tau$-maximum cover problem.

For the max k-subgraph problem, the best known approximation algorithm is the
 $n^{1/4}$ ratio approximation
algorithm in~\cite{BCCFV10}.
It is conjectured that the problem
cannot be approximated within $n^{\delta }$ for some
$0<\delta <1$, but currently the best known
hardness result is that it has no
polynomial time approximation scheme unless
 $NP \nsubseteq \cap_{\delta>0} BPTIME(2^{n^{\delta}})$~\cite{FKP99,Khot05}.

The relationship of the
approximability between the set minimum cover problem
and the set maximum cover problem is shown in the next theorem,
proved in~\cite{DM13} (the proof in the paper 
applies to 2-minimum cover problem, which can be easily
extended to the $\tau$-minimum cover problem). 

\begin{theorem}
If the $\tau$-minimum cover problem can be
approximated within ratio $1+\epsilon$ ($0<\epsilon <1$ ,that is,
there is an algorithm that can return
a subset of $S$ with no more than $1+\epsilon$ times number
of elements compared with the optimal solution),
  then the
$\tau$-maximum cover problem can be approximated within
 ratio $4^{\tau\epsilon }$ (that is,
suppose the optimal solution
of the $\tau$-maximum cover problem
is $OPT$, there exists an algorithm
that can return a subset of $S$ that fully covers
at least $4^{-\tau\epsilon } OPT$  number of subsets of $C$.
\end{theorem}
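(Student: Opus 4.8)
The plan is to establish the claimed implication by a direct reduction: given an algorithm $\mathcal{A}$ that approximates the $\tau$-minimum cover problem within ratio $1+\epsilon$, I would run it on the instance of $\tau$-minimum cover that is naturally ``dual'' to the given $\tau$-maximum cover instance, and then convert its output back into a feasible solution for the $\tau$-maximum cover instance. First I would fix notation: let the $\tau$-maximum cover instance ask for $S' \subseteq S$ with $|S'| = n_1$ maximizing the number of sets of $C$ fully covered, and let $OPT$ be that optimum. The key observation is that ``fully covering $m$ sets with $n_1$ elements'' (the maximization side) and ``using the fewest elements to fully cover $m$ sets'' (the minimization side) are tied together: if the maximum-cover optimum fully covers $OPT$ sets using $n_1$ elements, then the minimum-cover instance that demands fully covering (roughly) $OPT$ sets has an optimal value $\le n_1$, so $\mathcal{A}$ returns a set of size $\le (1+\epsilon) n_1$ that fully covers that many sets.

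The main steps, in order, are: (1) Set up the correspondence between the two instances, being careful about which parameter ($m$ on the minimization side, $n_1$ on the maximization side) is fixed and which is being optimized; I would likely need to try all relevant guesses of $m$ (there are at most $m_1$ of them, so this is polynomial) so that one guess matches $OPT$. (2) For the guess $m = OPT$, invoke $\mathcal{A}$ to obtain $S''$ with $|S''| \le (1+\epsilon)n_1$ fully covering $\ge OPT$ sets. (3) Since $S''$ may be larger than the budget $n_1$, partition or trim $S''$ down to size $n_1$; the crucial quantitative step is to argue that some size-$n_1$ subset of $S''$ still fully covers a $4^{-\tau\epsilon}$ fraction of the $OPT$ sets. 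This is where the factor $4^{\tau\epsilon}$ and the bound $\tau$ on set sizes enter: each of the $\ge OPT$ fully-covered sets has at most $\tau$ elements, all drawn from $S''$; a random size-$n_1$ subset of $S''$ retains each such set fully with probability at least $\binom{|S''|-\tau}{n_1-\tau}/\binom{|S''|}{n_1} \ge \left(\frac{n_1}{|S''|}\right)^{\tau} \ge (1+\epsilon)^{-\tau} \ge 4^{-\tau\epsilon}$ (using $1+\epsilon \le 4^{\epsilon}$ for $0<\epsilon<1$), so in expectation, and hence for some choice, at least $4^{-\tau\epsilon} OPT$ sets survive; this choice can be found deterministically by greedy derandomization or by simply noting the expectation bound suffices for the stated guarantee. (4) Output that trimmed subset as the $\tau$-maximum cover solution.

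The hard part will be step (3): making the trimming argument rigorous while keeping the loss exactly at the factor $4^{\tau\epsilon}$. One must be careful that $|S''|$ could be close to $(1+\epsilon)n_1$ (the worst case for the ratio) but no larger, and that $n_1 \ge \tau$ in the regime that matters (if $n_1 < \tau$ no set of size $\tau$ can ever be covered, so those sets are irrelevant and can be discarded up front). I would also double-check the edge cases where $OPT$ is small or where $\mathcal{A}$'s output already fits the budget, in which cases the bound is immediate. Since this theorem is cited as already proved in~\cite{DM13} for $\tau = 2$ and the extension to general $\tau$ is described there as routine, I would mirror that argument, replacing the constant $4$ in the $2$-element case by the same $4$ but with exponent scaled by $\tau$, which is exactly what the $(1+\epsilon)^{-\tau} \le 4^{-\tau\epsilon}$ estimate delivers.
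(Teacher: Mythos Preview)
The paper does not actually prove this theorem; it simply cites \cite{DM13} and remarks that the $\tau=2$ argument there extends routinely. Your overall strategy---guess $m$, run $\mathcal{A}$ on the corresponding $\tau$-minimum cover instance to get $S''$ with $|S''|\le(1+\epsilon)n_1$ fully covering $\ge OPT$ sets, then trim $S''$ to size $n_1$---is exactly the natural reduction and is presumably what \cite{DM13} does.

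There is, however, a concrete error in your step (3). You claim that a uniformly random $n_1$-subset of $S''$ retains a fixed $\tau$-element set with probability
\[
\frac{\binom{|S''|-\tau}{n_1-\tau}}{\binom{|S''|}{n_1}} \;\ge\; \left(\frac{n_1}{|S''|}\right)^{\tau},
\]
but this inequality goes the wrong way. Writing $N=|S''|$, the left side equals $\prod_{i=0}^{\tau-1}\frac{n_1-i}{N-i}$, and since $n_1\le N$ each factor satisfies $\frac{n_1-i}{N-i}\le\frac{n_1}{N}$ for $i\ge 1$; hence the product is \emph{at most} $(n_1/N)^{\tau}$, not at least. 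So the random exact-size-$n_1$ subset argument, as written, does not deliver the bound $(1+\epsilon)^{-\tau}\ge 4^{-\tau\epsilon}$.

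The fix is not hard but needs to be made explicit. One clean route is to sample each element of $S''$ independently with probability $p=n_1/N$: then each $\tau$-set survives with probability exactly $p^{\tau}\ge(1+\epsilon)^{-\tau}\ge 4^{-\tau\epsilon}$, and the expected size is $n_1$; a standard alteration/derandomization step then produces a set of size $\le n_1$ meeting the bound (padding to exactly $n_1$ is free). Alternatively, one can absorb the extra $\bigl(\frac{n_1-\tau}{n_1}\bigr)^{\tau}$ loss from the hypergeometric bound into the constant by assuming $n_1$ is large relative to $\tau$, but then the stated ratio $4^{\tau\epsilon}$ holds only asymptotically. Either way, you should replace the incorrect inequality with one of these arguments.
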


Note that  $4^{-\tau\epsilon }$ is very close to 1 if $\epsilon $ is
very close to 0.

\begin{corollary}
The the set minimum cover problem problem does not
have polynomial time approximation scheme unless $NP \nsubseteq \cap_{\delta>0} BPTIME(2^{n^{\delta}})$.
\end{corollary}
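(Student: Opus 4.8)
The plan is to argue by contradiction, chaining the reduction of the preceding theorem with the known inapproximability of the maximum $k$-subgraph problem. Suppose the set minimum cover problem admits a PTAS. Since the $\tau$-minimum cover problem is merely the restriction of the set minimum cover problem to instances in which every member of $C$ has at most $\tau$ elements, any approximation algorithm for the general problem is, in particular, an approximation algorithm for the $\tau$-minimum cover problem. Hence the $2$-minimum cover problem also admits a PTAS: for every $\epsilon \in (0,1)$ there is a polynomial-time algorithm returning a cover that uses at most $1+\epsilon$ times the optimal number of elements.

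Next I would apply the preceding theorem with $\tau = 2$. It converts such a $(1+\epsilon)$-approximation for $2$-minimum cover into a polynomial-time algorithm for the $2$-maximum cover problem that fully covers at least $4^{-2\epsilon}\,OPT$ of the subsets in $C$. Because $4^{-2\epsilon} \to 1$ as $\epsilon \to 0^{+}$, given any target ratio $1-\eta$ with $\eta > 0$ we can fix $\epsilon$ small enough that $4^{-2\epsilon} \ge 1-\eta$, and the associated algorithm then runs in polynomial time (the PTAS for the minimization problem runs in polynomial time for that fixed $\epsilon$) and achieves ratio $1-\eta$. Thus the $2$-maximum cover problem would admit a PTAS.

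Finally I would identify the maximum $k$-subgraph problem (unit-weight case) with the $2$-maximum cover problem: let $S$ be the vertex set, let $C$ be the edge set (each edge a $2$-element subset of $S$), and set $n_1 = k$; then a chosen subset $S'$ of $k$ vertices fully covers exactly the edges of the subgraph induced on $S'$. Hence a PTAS for $2$-maximum cover yields a PTAS for the maximum $k$-subgraph problem, which is impossible unless $NP \nsubseteq \cap_{\delta>0} BPTIME(2^{n^{\delta}})$ by~\cite{FKP99,Khot05}. This contradiction proves the corollary, and the same argument applied to the $\tau$-minimum cover problem for any fixed $\tau \ge 2$ shows that it, too, has no PTAS under the same assumption.

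The restriction step and the encoding of max $k$-subgraph into $2$-maximum cover are immediate; the only delicate point is the quantifier bookkeeping. One must check that a PTAS for the minimization problem really does feed, one value of $\epsilon$ at a time, into a polynomial-time algorithm for each prescribed accuracy of the maximization problem, and that choosing the \emph{constant} $\tau = 2$ (rather than a $\tau$ that grows with the instance) keeps $4^{-\tau\epsilon}$ uniformly bounded away from $0$ as the target gap $\eta$ shrinks. I expect this mild quantifier management to be the main obstacle; everything else follows directly from results already stated in the excerpt.
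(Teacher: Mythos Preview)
Your proposal is correct and follows essentially the same approach as the paper: restrict to $\tau=2$, invoke the preceding theorem to convert a PTAS for $2$-minimum cover into a PTAS for $2$-maximum cover, and identify the latter with the maximum $k$-subgraph problem to reach the contradiction with~\cite{FKP99,Khot05}. The paper's own proof is a two-line sketch of exactly this chain; you have simply spelled out the contrapositive and the quantifier management in more detail.
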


\begin{proof}
This follows from the above theorem and the
result that max k-subgraph problem
(special case of the $\tau$-maximum cover problem when
$\tau=2$ ) does not
have  polynomial time approximation scheme unless $NP \nsubseteq \cap_{\delta>0} BPTIME(2^{n^{\delta}})$.
\end{proof}

\begin{corollary}
Both the threshold minimum node cut problem
and the set minimum cover problem do not
have polynomial time approximation scheme unless $NP \nsubseteq \cap_{\delta>0} BPTIME(2^{n^{\delta}})$.
\end{corollary}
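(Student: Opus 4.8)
The plan is to note first that the half of the statement concerning the set minimum cover problem is exactly the preceding corollary, so all that is new here is the statement about the threshold minimum node cut (TMNC) problem. For that I would exhibit a cost-preserving reduction \emph{from} the inverse $k$-subgraph problem \emph{to} TMNC, so that any $(1+\epsilon)$-approximation for TMNC yields a $(1+\epsilon)$-approximation for the inverse $k$-subgraph problem. The inverse $k$-subgraph problem is the special case of the set minimum cover problem in which the ground elements are the vertices of a graph $H$ and each subset is an edge of $H$ (equivalently, the $\tau$-minimum cover problem with $\tau=2$ and all sets of size exactly two), so by the $\tau=2$ case of the theorem relating $\tau$-minimum cover to $\tau$-maximum cover, together with the known PTAS-hardness of max $k$-subgraph, the inverse $k$-subgraph problem has no PTAS unless $NP \nsubseteq \cap_{\delta>0} BPTIME(2^{n^{\delta}})$; the reduction then transfers this obstruction to TMNC.

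The reduction I have in mind, given an inverse $k$-subgraph instance consisting of a graph $H=(V_H,E_H)$ and an integer $m\le |E_H|$, builds the TMNC graph $G$ as follows: a client node $A$; an \emph{element node} $v_u$ of weight $1$ for each $u\in V_H$; a \emph{service node} $S_e$ for each edge $e=(u,w)\in E_H$, joined to $A$ by the two internally vertex-disjoint length-two paths through $v_u$ and through $v_w$; weight $M:=|V_H|+1$ assigned to $A$ and to every $S_e$; service set $\Gamma=\{S_e:e\in E_H\}$; and threshold $l:=m$. This is a polynomial-size simple graph. The structural fact I would prove is that, since $S_e$ is linked to $A$ only through two vertex-disjoint paths, a node set $X$ disconnects $S_e$ from $A$ if and only if it contains \emph{both} $v_u$ and $v_w$. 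Hence every cut of weight less than $M$ consists solely of element nodes; identifying such a cut with the vertex subset $X\subseteq V_H$ it corresponds to, the service nodes it separates from $A$ are exactly $\{S_e:e\subseteq X\}$, it is feasible precisely when $X$ spans at least $m$ edges of $H$, and its weight equals $|X|$. Since $X=V_H$ is always feasible and has weight $|V_H|<M$, this gives $\mathrm{OPT}(G)=\mathrm{OPT}(H)$ and, crucially, turns any cut of weight $C$ into a vertex set of size $C$ inducing $\ge m$ edges and conversely, so approximation ratios carry over verbatim.

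Chaining the pieces then finishes the proof: a PTAS for TMNC would give a PTAS for the inverse $k$-subgraph problem, hence (via the $\tau=2$ instance of the stated theorem) an approximation for $2$-maximum cover $=$ max $k$-subgraph with ratio $4^{-2\epsilon}$, which tends to $1$ as $\epsilon\to 0$ and therefore constitutes a PTAS for max $k$-subgraph---impossible unless $NP \nsubseteq \cap_{\delta>0} BPTIME(2^{n^{\delta}})$---while the set minimum cover half is just the previous corollary. I expect the only genuinely delicate points to be the ``both endpoints of $e$ must lie in the cut'' argument for the parallel paths (which is also what makes the prohibitive weight $M$ do its job of keeping optimal cuts away from $A$ and the service nodes) and the bookkeeping that makes the reduction exactly cost-preserving; the arithmetic with $4^{-\tau\epsilon}$ is not new, being inherited from the theorem already proved.
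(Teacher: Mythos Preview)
Your proposal is correct. The paper actually states this corollary \emph{without any proof}, treating it as immediate from the surrounding discussion: the relationships among TMNC, the inverse $k$-subgraph problem, and the set minimum cover problem are attributed to the cited reference~[DM13], and the corollary is then asserted as a consequence of the preceding corollary and theorem. Your explicit approximation-preserving reduction from inverse $k$-subgraph to TMNC---service nodes for edges of $H$, weight-$1$ element nodes for vertices, two parallel length-two $S_e$--$A$ paths through $v_u$ and $v_w$ that force any element-only cut separating $S_e$ from $A$ to contain both endpoints, and prohibitive weight $M=|V_H|+1$ on $A$ and the $S_e$'s---correctly fills in the link the paper leaves implicit, and your chaining through the $\tau=2$ case of the stated theorem is exactly what the paper's context intends. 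In short: the paper omits the argument; you supplied one, and it is sound.
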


We can further show that the maximum cover problem
cannot be approximated within any constant ratio unless $NP \nsubseteq \cap_{\delta>0} BPTIME(2^{n^{\delta}})$.

\begin{theorem}
The maximum cover problem
cannot be approximated within any constant ratio unless $NP \nsubseteq \cap_{\delta>0} BPTIME(2^{n^{\delta}})$.
\end{theorem}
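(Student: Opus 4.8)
The plan is to amplify the non-existence of a PTAS for the maximum cover problem (equivalently, by the preceding discussion, for the max $k$-subgraph problem) into a non-existence of \emph{any} constant-factor approximation, using a standard product/tensoring construction on set systems. First I would recall that max $k$-subgraph, viewed as the $2$-maximum cover problem, has no PTAS unless $NP \nsubseteq \cap_{\delta>0} BPTIME(2^{n^{\delta}})$, so there is some fixed $\rho_0 < 1$ such that approximating the maximum cover problem within ratio $\rho_0$ already contradicts this assumption. The goal is then to show that a $c$-approximation for the maximum cover problem, for \emph{any} constant $c$, yields (after boosting) a $\rho_0$-approximation, hence the same complexity consequence.

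The key step is a graph/set-system product. Given an instance $(S, C, m)$ of the maximum cover problem, I would form, for a parameter $r$, the $r$-fold product instance whose ground set is $S^r$ (or rather the disjoint union of $r$ copies of $S$, with the new "sets" being $r$-tuples of old sets, covered exactly when all $r$ coordinates are covered), and whose budget is set so that selecting an $r$-tuple of sets corresponds to a single selection in the product instance. The point of this construction is that the optimal number of fully covered product-sets is (a monotone function of, ideally exactly) $OPT^r$ where $OPT$ is the optimum of the original instance, while a feasible solution covering a $\gamma$ fraction of the original optimum gives at most roughly $\gamma^r$ fraction in the product — or, read in the other direction, a $\gamma$-fraction solution in the product can be decoded into a $\gamma^{1/r}$-fraction solution of the original. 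Choosing $r$ large enough that $c^{1/r} \geq \rho_0$ then converts a constant-factor-$c$ approximation on the product into a $\rho_0$-approximation on the original instance, and since the product instance has size polynomial in the original (for constant $r$), this is a polynomial-time reduction. I would present the construction concretely on the $2$-maximum cover / max $k$-subgraph formulation so that the budget bookkeeping (the "$m$ subsets, minimizing/maximizing over selections" constraint) stays clean, and then remark that it extends verbatim.

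The main obstacle I anticipate is making the budget and the counting match up \emph{exactly} in the product so that $OPT$ of the product is genuinely $OPT^r$ (or at least sandwiched tightly enough), rather than merely bounded on one side: one must ensure that an optimal product solution really does come from taking $r$-fold products of a single optimal original solution, and that a suboptimal product solution cannot "cheat" by mixing coordinates to fully cover more tuples than the $\gamma^{1/r}$-decoding allows. This is the usual subtlety in amplification arguments, and I would handle it by a convexity/averaging argument: from any product solution covering $\gamma \cdot OPT^r$ tuples, averaging over the $r$ coordinates produces a single-coordinate selection fully covering at least $\gamma^{1/r}\cdot OPT$ original sets (using that the fraction covered in coordinate $i$, multiplied over $i$, is an upper bound for the fraction of fully covered tuples, together with the AM–GM / geometric-mean inequality). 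Once that decoding lemma is in place, the choice $r = \lceil \log_{1/\rho_0} c \rceil$ closes the argument, and the only remaining check is that all quantities involved stay polynomial, which holds since $r$, $\rho_0$, and $c$ are constants.
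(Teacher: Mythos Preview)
Your overall strategy---boost a constant-factor approximation into a PTAS via a product/tensoring construction, then invoke the PTAS-hardness of max $k$-subgraph---is exactly the plan the paper follows. The gap is in your choice of product. You propose to take $r$ disjoint copies of the ground set (or $S^r$) and an aggregated budget, then decode a good coordinate by AM--GM. But the geometric-mean step only tells you that \emph{some} coordinate $i$ has $\eta_i \geq \gamma^{1/r}\,OPT$; it says nothing about the \emph{size} $|T_i|$ of that coordinate's selection, which may well exceed the original budget $n_1$ (since only $\sum_i |T_i| = r\,n_1$ is enforced). The ``averaging over coordinates'' you invoke therefore does not produce a feasible size-$n_1$ subset of $S$, so the decoding lemma is not justified and the budget-balancing obstacle you correctly anticipated is not actually resolved.

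The paper sidesteps this entirely with a much simpler product: keep the \emph{same} ground set $S$ and the \emph{same} budget $n_1$, and replace only the collection $C$ by the multiset $C^2 = \{A\cup B : A,B\in C\}$ (retaining all repetitions). Then any feasible $S'\subseteq S$ covers $A\cup B$ if and only if it covers both $A$ and $B$, so the number of $C^2$-sets covered by $S'$ is \emph{exactly} the square of the number of $C$-sets it covers. Hence the optimum of the squared instance is exactly $OPT^2$, no averaging or budget bookkeeping is needed, and iterating the squaring a constant number of times turns any constant-factor approximation into a PTAS. Replacing your disjoint-union product with this union-of-pairs construction on the collection closes the argument.
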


\begin{proof}
First we show that if $2\tau$-maximum cover can be approximated within ratio
$\zeta$ ($\zeta > 1$, then  $\tau$-maximum cover can be approximated within ratio
$\sqrt{\zeta}$. For a $\tau$-maximum cover instance with
set $S$, collection $C$, and integer $n_1$, we can
construct a $2\tau$-maximum cover instance as follows:
the set $S$ and integer $n_1$ is the same as that in
the $\tau$-maximum cover instance. We define
a new collection $C^2$ which contains the union of all possible
pairs of subsets in $C$, with all repetitions of subsets being kept. For example,
if  $C=\{ \{a_1, a_2\},\{a_2, a_3\} \}$, then
\[ C^2=\{ \{a_1, a_2\}, \{a_2, a_3\} ,  \{a_1, a_2, a_3\} ,  \{a_1, a_2, a_3\} \}   \]
We define $C^2$ to be the subset collection of the $2\tau$-maximum cover instance.
Suppose we have a ratio $\zeta$ approximation algorithm for
the $2\tau$-maximum cover instance, we can apply the the same solution (set of elements, denoted
as $S'$)
for the $\tau$-maximum cover instance. We have the following observations:
\begin{itemize}
\item{If $S'$ fully covers $\eta$ subsets of $C$, then it fully covers $\eta^2$ subsets of $C^2$. }
\item{If $S'$ fully covers $\eta$ subsets of $C^2$, then it fully covers $\sqrt{\eta}$ subsets of $C$. }
\item{If the optimal solution of the $2\tau$-maximum cover instance
can fully cover $OPT$ subsets of  $C^2$, then the optimal
solution of the $\tau$-maximum cover instance fully covers $\sqrt{OPT}$ subsets of $C$. }
\end{itemize}

Actually we can further show that any  subset of $S$ will fully cover a perfect square
number of subset in $C^2$. Suppose $S'$ fully covers $\eta$ subsets
in $C$, then $S'$ will cover all those subsets that are union of two
subsets in $C$, and $S$ will not fully cover any other subsets in $C^2$
because at least one of the elements of these subsets will not be in $S'$.
This means the number of fully covered subsets in $C^2$ by $S'$ will always be a perfect square.

So if we have a ratio $\zeta$ approximation algorithm for the $2\tau$-maximum cover problem,
then we will have a ratio $\sqrt{\zeta}$  approximation algorithm for the $\tau$-maximum cover problem.

If there exists a constant ratio (denoted as $\zeta$) approximation algorithm
 for the set maximum cover problem, we have the following
observation:
given any instance of the set maximum cover problem with maximum number of elements in the
collection of subsets to be $\tau$ and any number $\zeta_1 > 1$, we can
construct a $2^\mu\tau$-maximum cover instance where $\mu=\lceil \log(\zeta/\zeta_1) \rceil$,
with the same $S$ and $n_1$, but the collection is defined to be $C^{2^{\mu} }$. Here
   $C^{2^{\mu} }$ is the resulting collection of subsets by applying the
$C^2$ operation $\mu$ times iteratively. For example, $C^4$ is the
resulting collection which contains the union of all possible
pairs of subsets in $C^2$. Since the $2^\mu\tau$-maximum cover instance can be approximated
within $\zeta$, then we can use its result to obtain the $ \sqrt[2^{\mu}]{\zeta} \leq \zeta_1$
approximation for the original $\tau$-maximum cover instance. This means
if  there exists a constant ratio approximation algorithm
 for the set maximum cover problem, then we can find a
polynomial time approximation scheme for it. However we know that
maximum $k$-subgraph problem (which is equivalent to the  2-maximum cover problem)
does not have polynomial time approximation scheme unless $NP \nsubseteq \cap_{\delta>0} BPTIME(2^{n^{\delta}})$, so there
exists no constant ratio approximation for the set maximum cover problem unless
$NP \nsubseteq \cap_{\delta>0} BPTIME(2^{n^{\delta}})$.
\end{proof}

\begin{corollary}
The network inhibition problem (the goal is to
find the most effective way to reduce the capacity
of a network flow within fixed budget)
and the network interdiction
problem (a different version of network inhibition,
and the goal is to  choose a subset of arcs to delete, without
exceeding the budget, that minimizes the maximum
flow or other flow metrics
that can be routed through the network induced on the remaining arcs)
  cannot be approximated within any constant ratio unless
$NP \nsubseteq \cap_{\delta>0} BPTIME(2^{n^{\delta}})$.
\end{corollary}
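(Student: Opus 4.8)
The plan is to obtain this corollary from the theorem just proved, which states that the set maximum cover problem admits no constant‑factor approximation unless $NP \nsubseteq \cap_{\delta>0} BPTIME(2^{n^{\delta}})$. Concretely I would give an approximation‑preserving (indeed $L$‑type) reduction from set maximum cover to the network inhibition problem, and note that the very same construction handles the network interdiction problem under its flow‑reduction metric. The crux is that the quantity ``number of subsets of $C$ fully covered by a size‑$n_1$ set $S'\subseteq S$'' can be realized exactly as ``amount of max‑flow reduction achievable with an interdiction budget of $n_1$,'' so any $\zeta$‑approximation for the interdiction‑type objective pulls back verbatim to a $\zeta$‑approximation for set maximum cover.

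For the construction, given a set maximum cover instance $(S,C,n_1)$ I would build a directed network with source $\sigma$ and sink $\rho$ as follows. For each subset $c\in C$ put a node $v_c$ with a unit‑capacity arc $\sigma\to v_c$. For each element $x_i\in S$ put a ``gate'' arc $g_i=(x_i^{\mathrm{in}},x_i^{\mathrm{out}})$ whose capacity is the number of subsets containing $x_i$, plus a unit‑cost label (everything else gets infinite interdiction cost). For each $c$ and each $x_i\in c$ add arcs $v_c\to x_i^{\mathrm{in}}$ and $x_i^{\mathrm{out}}\to\rho$, so that $c$ may push its one unit of flow through the gate of any one of its elements. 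Set the budget to $n_1$. One checks that the max flow is $|C|$ before interdiction, and that after deleting the gate set $\{g_i:i\in S'\}$ the residual max flow is exactly $|C|-|\{c\in C:c\subseteq S'\}|$; since deleting any non‑gate arc is infeasible within budget and enlarging $S'$ only helps, the maximum flow reduction achievable within budget $n_1$ equals the optimum of the set maximum cover instance, and any feasible interdiction removing $r$ units of flow directly yields a size‑$n_1$ subset of $S$ fully covering $r$ subsets of $C$. Hence the inapproximability of set maximum cover transfers to network inhibition (``most effective capacity reduction within a budget'') and to network interdiction with the flow‑removed objective; and since TMC is already a special case of both problems, these generalizations are a fortiori at least as hard as TMC.

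The step I expect to be the main obstacle is the objective‑type mismatch for the most common phrasing of network interdiction, ``minimize the residual maximum flow'' rather than ``maximize the flow removed'': a multiplicative approximation of the residual flow only translates additively into coverage, so the clean transfer above genuinely needs the flow‑reduction objective. To also cover the residual‑flow phrasing I would either argue that this metric is among the ``other flow metrics'' the problem permits, or pad the reduction so that the optimal coverage is a constant fraction of $|C|$ — for instance by first applying the squaring map $C\mapsto C^{2^{\mu}}$ from the preceding theorem, which drives the optimum toward $|C|$ — after which a constant‑factor residual‑flow approximation again implies a constant‑factor coverage approximation and the contradiction goes through. A secondary, routine point to check is that the capacitated‑gate gadget makes ``select element $x_i$'' cost exactly one unit of budget while keeping all subset flows simultaneously routable, so that ``budget $n_1$'' corresponds precisely to ``choose $n_1$ elements.''
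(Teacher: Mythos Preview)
Your approach is essentially the same as the paper's: reduce set maximum cover to the inhibition/interdiction problem via a gadget in which each element of $S$ becomes a unit-cost deletable arc and each subset in $C$ contributes one unit of flow that survives unless all its element-arcs are removed. The paper's gadget is a mirror of yours (source $\to$ element-arcs $\to$ subset-nodes $\to$ sink, with the unit capacities on the subset-to-sink arcs rather than on the source-to-subset arcs), but the correspondence ``budget-$n_1$ flow reduction $=$ number of fully covered subsets'' is identical.

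Your discussion of the objective-type mismatch is more careful than the paper, which simply asserts the reduction handles both problems without distinguishing ``maximize flow removed'' from ``minimize residual flow.'' Your observation that the former transfers cleanly while the latter needs either the ``other flow metrics'' reading or a padding argument is correct and worth keeping; the paper does not address this point.
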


\begin{proof}
We can reduce the set maximum cover problem to the network inhibition
problem and network interdiction
problem. For an instance of the set maximum cover problem
with set $S$, collection $C$ and integer $n_1$, we can construct
a directed graph as follows. We first create a source node $U$  and destination
node $T$. For every subset in $C$, we create
a node. For every element in $S$, we create an arc, which is again
connecting to every node whose corresponding subset contains
the element. The source is connected to the starting point of every
arc that corresponds to every element. Every node
that corresponds to a subset is connected to the destination.
The blocking cost of the arcs that corresponds to  the elements is
1, and the cost of all other arcs is infinity.
We also set the capacity of the  incoming arcs
of destination to be 1, all other arcs have capacity infinity.  As an example,
suppose $S=\{a_1,a_2,a_3\}$, $C=\{ \{a_1,a_2\}, \{a_2,a_3\}\}$,
then the constructed network inhibition/interdiction instance is shown in
Fig.~\ref{fig:inh}. In the figure, the first number corresponds to every arc  is the
capacity of the arc, and the second number corresponds to every arc is the
blocking cost of the arc.

\begin{figure}
\centering

\includegraphics[width=2.5in]{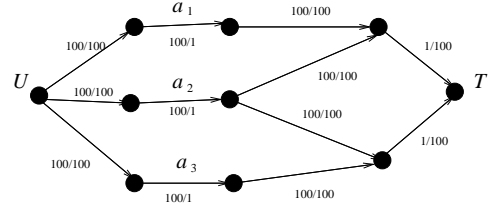}
\caption{The constructed  network inhibition/interdiction instance}
\label{fig:inh}
\end{figure}

Now it is easy to see the maximum flow
from $U$ to $T$ is $|C|$. If a subset of $S$ fully covers
any subset in $C$, the deletion of  the corresponding arcs
will reduce the maximum flow by 1. We can see that
the  set maximum cover in the original instance
corresponds to a partial cut for the flow. In this way we reduce the
 set maximum cover problem to the
 network inhibition
problem and network interdiction
problem.
 So the network inhibition problem and the network interdiction
problem cannot be approximated within any constant ratio unless
$NP \nsubseteq \cap_{\delta>0} BPTIME(2^{n^{\delta}})$.
\end{proof}

\section{Approximation Algorithms of TMC}
\label{sec:alg}

First we present a ratio $O(\sqrt{n})$ approximation algorithm
for TMNC problem in Algorithm~\ref{alg:node}.
In the algorithm, we assume that $l \geq \sqrt{n}$, since
it is trivial to have an $\sqrt{n}$  approximation
when $l < \sqrt{n}$ (one just needs to sort
$S_1, \ldots, S_k$ according to their minimum cut values with $A$
in ascending order, choose the first $l$ nodes and  find the minimum
cut  between these nodes and $A$).

\begin{algorithm}
  1. Solve the following Linear Programming (LP):

  \[ \text{  Maximize  } \displaystyle\sum_{\text{ all nodes } v_i}
        X_ic_i  \]

  Subject to

    \[ Y_i \leq X_i + Y_j\, ,\text{ for all
       neighbors }v_j \text{ of } v_i\, ,\forall v_i  \]
   \[ 0 \leq X_i \leq 1\,, 0 \leq Y_i \leq 1\,,\forall v_i \]
   \[ Y_A = 0\]
   \[ \displaystyle\sum_{i=1}^kY_{S_i} \geq l\]

 After solving the LP, sorting the nodes $S_1,S_2, \ldots S_k$
 according to the LP value $Y_{S_i}$
in descending order.

 2.  Find the first $S_i$ in the sorted list  $S_1, \ldots, S_k$ that is less than
 $1/\sqrt{n}$.

  3. \If {$i > l$} {Find the minimum cut between $S_1, \ldots, S_{l}$ and $A$, return
 this cut value.}
    \Else {
Sorting all $S_j$ ($j>i$ ) in ascending order
according to the cut value $c(S_j)$, here $c(S_j)$ is the minimum cut value
between node $S_j$ and node $A$. Denote the first $l-i+1$ nodes in
the sorted list as $S'_1, \ldots, S'_{l-i+1}$.
Find the minimum cut between $S_1, \ldots, S_{i-1}, S'_1, \ldots, S'_{l-i+1} $ and $A$.
return this cut value. }
 \caption{The Approximation Algorithm for Threshold Minimum Node cut}
\label{alg:node}
\end{algorithm}

\begin{theorem}
Algorithm~\ref{alg:node} achieves a ratio of $O(\sqrt{n})$.
\end{theorem}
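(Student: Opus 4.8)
The plan is to bound the cost of the cut returned by Algorithm~\ref{alg:node} against two lower bounds on the optimum $OPT$ of the TMNC instance: the optimum $OPT_{LP}$ of the LP in Step~1, and, for individual service nodes, their single-source minimum cut values $c(S_j)$. That $OPT_{LP}\le OPT$ is routine: from an optimal node cut $C$ separating at least $l$ of the $S_j$ from $A$, set $X_i=1$ if $v_i\in C$ and $0$ otherwise, and $Y_i=1$ if $v_i$ is unreachable from $A$ in $G\setminus C$ (in particular for $v_i\in C$) and $0$ otherwise; one checks that all LP constraints hold (the only nontrivial one, $Y_i\le X_i+Y_j$, holds because whenever $Y_i=1$ either $v_i\in C$, so $X_i=1$, or every neighbour of $v_i$ is also unreachable, so $Y_j=1$), that $\sum_i X_ic_i=\mathrm{cost}(C)=OPT$, and that $\sum_i Y_{S_i}\ge l$.

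The heart of the argument is the following rounding claim: if $R\subseteq\{S_1,\dots,S_k\}$ and the LP optimum has $Y_{S_j}\ge 1/\sqrt n$ for every $S_j\in R$, then the minimum node cut separating $R$ from $A$ has cost at most $\sqrt n\,OPT_{LP}$. To prove it, scale and cap: let $X'_i=\min(1,\sqrt n\,X_i)$. Telescoping $Y_v\le X_v+Y_u$ from $Y_A=0$ gives $Y_v\le\sum_{u\in P}X_u$ along any $A$--$v$ path $P$, and a short case check then gives $\sum_{u\in P}X'_u\ge\min(1,\sqrt n\,Y_{S_j})=1$ for every $A$--$S_j$ path with $S_j\in R$. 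Thus $X'$ is a feasible fractional node cut for the pair $(A;R)$ of cost $\le\sqrt n\,OPT_{LP}$; since the minimum node cut separating a single source from a terminal set equals the optimum of this path-covering LP (classically, via node splitting and max-flow/min-cut), there is an integral such cut of cost $\le\sqrt n\,OPT_{LP}\le\sqrt n\,OPT$.

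Let $i$ be the index found in Step~2, so exactly $i-1$ of the $S_j$ have $Y_{S_j}\ge 1/\sqrt n$. Using $Y_{S_j}\le 1$, $Y_{S_j}<1/\sqrt n$ for $j\ge i$, and $k\le n$, the constraint $\sum_j Y_{S_j}\ge l$ forces $l\le(i-1)+(k-i+1)/\sqrt n\le(i-1)+\sqrt n$, i.e.\ $l-i+1\le\sqrt n$. Now split into cases. If $i>l$, then $S_1,\dots,S_l$ all have $Y_{S_j}\ge 1/\sqrt n$, so by the rounding claim the cut returned in Step~3 costs $\le\sqrt n\,OPT$. If $i\le l$, the returned cut separates $\{S_1,\dots,S_{i-1}\}\cup\{S'_1,\dots,S'_{l-i+1}\}$ from $A$, hence costs at most the cost of a cut separating $\{S_1,\dots,S_{i-1}\}$ from $A$ plus $\sum_r c(S'_r)$; the first term is $\le\sqrt n\,OPT$ by the claim, and for the second, $OPT$ disconnects at least $l$ service nodes of which at least $l-i+1$ have $Y<1/\sqrt n$, each such node has $c(S_j)\le OPT$, and Step~3 picks the $l-i+1$ cheapest members of $\{S_j:Y_{S_j}<1/\sqrt n\}$, so $c(S'_r)\le OPT$ for each $r$ and $\sum_r c(S'_r)\le(l-i+1)OPT\le\sqrt n\,OPT$; altogether $\le 2\sqrt n\,OPT$. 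The case $l<\sqrt n$ excluded at the outset is immediate: the $l$ service nodes of smallest $c(S_j)$ each have $c(S_j)\le OPT$, so cutting them from $A$ costs $<\sqrt n\,OPT$.

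I expect the main obstacle to be the rounding claim: one must argue that scaling the fractional LP solution yields an \emph{integral} node cut with no logarithmic loss, which relies on separating from a single source $A$ rather than from many source-sink pairs, and the cap at $1$ in the definition of $X'$ must be handled so that the $A$--$S_j$ path constraints survive. A secondary point needing care is checking, in the case $i\le l$, that the returned cut is feasible (it disconnects all $l$ chosen terminals, hence at least $l$ service nodes) and that combining a cut for the high-$Y$ terminals with the union of the individual minimum cuts for the low-$Y$ terminals is a valid upper bound on the cut actually computed.
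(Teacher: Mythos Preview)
Your proof is correct and follows essentially the same approach as the paper: bound the high-$Y$ terminals via LP rounding (scaling by $\sqrt n$) and the low-$Y$ terminals via individual minimum cuts, using the LP constraint $\sum Y_{S_j}\ge l$ to get $l-i+1\le\sqrt n$. Your write-up is in fact more careful than the paper's, since you explicitly justify the rounding step (integrality of the single-source node-cut LP after scaling and capping) and the inequality $c(S'_{l-i+1})\le OPT$, both of which the paper only asserts.
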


\begin{proof}
The LP defines a fractional cut between the nodes
$S_1,S_2, \ldots S_k$ and $A$, and the summation
of the accumulated cut value of nodes $S_1,S_2, \ldots S_k$ is
at least $l$. If we sort the nodes $S_1,S_2, \ldots S_k$
 according to their cut value ($Y_{S_i}$)
in descending order and there are at least $l$ nodes in the list
than has cut value at least $1/\sqrt{n}$, then the minimum cut
between these $l$ nodes and $A$ will be at most $\sqrt{n}$ times the
fractional cut value returned by the LP in the algorithm (denoted as $c(LP)$).
But $ c(LP) \leq OPT$, where $OPT$ is the minimum threshold node cut value.
So in this case the algorithm achieves the ratio $\sqrt{n}$.
If there less than $l$ nodes in the sorted list that is less than
 $1/\sqrt{n}$, in the algorithm the minimum cut between   $S_1, \ldots, S_{i-1}$
and $A$ is less than   $\sqrt{n}OPT$.
The minimum cut between   $S'_1, \ldots, S'_{l-i+1}$
and $A$ is also less than   $\sqrt{n}c(S'_{l-i+1}) \leq \sqrt{n}OPT $,
because there are at most $\sqrt{n}$ nodes (in $S_1, \ldots, S_k$)
 that have a cut value less than $1/\sqrt{n}$, otherwise the total cut value
will be less than $l$ in the LP formulation.
So in this case the solution returned by the algorithm
will also be at most  $2\sqrt{n}OPT$.

\end{proof}

Next we present a ratio $O(\log^2{n})$ algorithm for the TMEC
 problem in Algorithm~\ref{alg:edge}.

\begin{algorithm}
1. Generate $k-1$ cliques $\gamma_1, \ldots, \gamma_{k-1}$
with size $n^2$ and one clique $\gamma_k$ with size   $(k-1)n^2$.
The cost of all edges in the cliques are set to be $n^2$.

2. \For{$i=1..k$} {  Connect node $S_i$ to an arbitrary node in $\gamma_k$ with an edge
of cost $n^2$, and connect each of the  remaining $S_j$ ($j\neq i$) to
one clique $\gamma_u$ ($ 1 \leq u \leq k-1$), with an edge of cost $n^2$.

 \For{$j= ((2l-2)n^2-n+l).. (2(k-1)n^2+n-2 )$ } {
    Generate a clique $\gamma'$ with size $j$. The cost of all edges in $\gamma'$ are set to be $n^2$.
    Connect node $A$ to an arbitrary node in $\gamma'$ with an edge
of cost $n^2$. Denote the current graph (with $l+1$ cliques added) as $G'$. Apply
    the $O(\log^2{n})$ approximation algorithm on $G'$ to find the minimum bisection.
 Denote the value of the bisection as $B(i,j)$.
      }
}
3. Find the minimum value of all $B(i,j)$, and return the  bisection
corresponds to this $B(i,j)$ as the threshold edge cut.
\caption{The Approximation Algorithm for Threshold Minimum Edge cut}
\label{alg:edge}
\end{algorithm}

We can prove that this algorithm can achieve an approximation ratio $O(\log^2{n})$.

\begin{proof}
We can consider the minimum threshold edge cut. Suppose the
 the minimum threshold edge cut separate $S_{\alpha_1}, \ldots ,  S_{\alpha_w}$
from $A$ and $w \geq l$. In the minimum threshold edge cut, the maximum number of nodes ( in $G$ )
that can be separated with $A$ can be as large as $n-1$ ($A$ is separated from all other
nodes),  and as small as $l$. We denote this number
as $\beta$. First we can see that no new added
edges in $G'$ will be in the minimum bisection since
all the new edges have a cost $n^2$, which is larger than
the total number of edges in $G$.
When $S_i \in \{S_{\alpha_1}, \ldots ,  S_{\alpha_w}\}$,
 one of the minimum bisection (in $G'$) with value $B(i,j)$ ( $j$
ranges from $(2l-2)n^2-n+l$ to  $2(k-1)n^2+n-2$ ) is equivalent
to the minimum threshold edge cut. When $\beta$ is $n-1$,
 the minimum bisection $B(i,j)$ in $G'$ with $j=  2(k-1)n^2+n-2$
 corresponds
to the minimum threshold edge cut in $G$, since in this case
all the other $n-1$ nodes in $G$ and all the cliques
appended to $S_1, \ldots, S_k$ ( total size $2(k-1)n^2$)
will be separated from $A$, but node $A$ and the appended
clique of $A$ will have total size $ 2(k-1)n^2+n-2 + 1 = 2(k-1)n^2 + n -1 $.
When $\beta$ is $l$,
 the minimum bisection $B(i,j)$ in $G'$ with $j= (2l-1)n^2-n+l$
 corresponds
to the minimum threshold edge cut in $G$, since in this case
the $l$ nodes in $S_{\alpha_1}, \ldots ,  S_{\alpha_l}$ and all the cliques
appended to then ( total size $(l-1+k-1)n^2$)
will be separated from $A$, but there are  $ n-l + (k-l)n^2 + (2l-2)n^2-n+l  = (l+k-2)n^2 $
remaining nodes in $G'$. So the minimum bisection
in $G'$ corresponds to a minimum threshold edge cut in $G$.
For all $\beta$ values between $l$ and $n-1$,
the algorithm will also find the corresponding minimum bisection
with appropriate $j$.  Since minimum bisection can be solved with
approximation ratio $O(\log^2{n})$, the above algorithm
also finds the minimum threshold edge cut with
ratio  $O(\log^2{n})$.

\end{proof}

\section{ Future Work}
\label{sec-fw}

The hardness of the   3-node  CPMEC (in undirected graphs) is still open, but we have some
interesting observations for the problem:
\begin{itemize}
\item{It is intriguing that
the algorithm does not work for general  graphs. It would be interesting to
classify the types of graphs that the  algorithm can find the optimal cut. }
\item{ We already have multiple hardness results for 3-node
 CPMNC but  the hardness of 3-node undirected CPMEC is still open since the
hardness proof of CPMNC cannot be applied to CPMEC. For many
other minimum cut based problems, such as the basic minimum cut and the minimum
multi-terminal cut there is no big difference between the hardness
of node cut and edge cut. So we conjecture that  3-node undirected CPMEC is also hard to solve, though it is not clear whether an NP-hard proof is available.
Even if it is not NP-complete, it may   not be in $P$, based on the assumption
that $P \neq NP$.}
\item{There are several minimum cut related problems which are NP-hard
in general case but have polynomial algorithms in planar graphs. The max-cut
problem and the minimum multi-terminal cut problem have polynomial time algorithm
in planar graphs~\cite{Dahlhaus94}. The hardness of Steiner tree problem in planar graphs
is still open but it has polynomial
time approximation scheme~\cite{BKK07}. The minimum multi-way
cut in planar graphs is NP-hard but has  polynomial
time approximation scheme~\cite{BHKM12}. The hardness of minimum
bisection in planar graphs is still open~\cite{Karpinski02}. It is important to investigate  what kind
of minimum cut related problems in planar graphs can be solved in polynomial
time and what is the deep logic behind this. Further research on this can provide
guidance on new problems related to planar minimum cut. }
\item{There is another similarity between minimum multi-terminal cut problem
and the CPMC in planar graphs.
Actually if
we  adopt the perturbation method, we can have a simple algorithm for minimum 3-terminal cut
in planar graphs.
The CPMC problem can be considered as the
``complementary'' problem of the minimum multi-terminal  cut problem.
Further investigation of the relationship between the two problems
will help the understanding of both problems.
}
\item{It is rather surprising that the dynamic programming  algorithm works for
3-node planar CPMEC. It is interesting to further investigate what kind of
constrained minimum cut problem can be solved by similar  algorithms. }
\end{itemize}

We conjecture that the 3-node undirected CPMEC problem may belong to a class
of problems that are neither in $P$ nor $NP$-complete. Thus, the problem
may be related to the central question of $NP$ versus $P$.
For some special and practical graphs,
we believe that there may exist efficient
precise or approximation algorithms, which will
be another future research direction.
For TMC problem, there is much room to improve the
the approximation ratio and hardness result.

\end{document}